\documentclass[lettersize,journal]{IEEEtran}
\usepackage{amsmath,amssymb,amsfonts,amstext}
\allowdisplaybreaks[4]
\usepackage{amsthm} 
\usepackage{algorithmic}
\usepackage{algorithm}
\usepackage{enumitem}
\usepackage{array}
\usepackage{color}
\usepackage[caption=false,font=normalsize,labelfont=sf,textfont=sf]{subfig}
\usepackage{textcomp}
\usepackage{stfloats}
\usepackage{url}
\usepackage{verbatim}
\usepackage{graphicx}
\usepackage{cite}
\usepackage{multirow}
\newtheorem{Definition}{Definition}

\newtheorem{Lemma}{Lemma}
\newtheorem{assumption}{Assumption}
\newtheorem{Theorem}{Theorem}
\theoremstyle{assumption}
\newtheorem{remark}{Remark}
\theoremstyle{assumption}
\newcommand{\black}[1]{\textcolor{black}{#1}}

\hyphenation{op-tical net-works semi-conduc-tor IEEE-Xplore}

\begin{document}

\title{Event-triggered control and communication for single-master multi-slave teleoperation systems with Try-Once-Discard protocol}

 \author{Yuling~Li,~\IEEEmembership{Member,~IEEE,}
	Chenxi~Li,
	Kun~Liu,~\IEEEmembership{Senior  Member,~IEEE,}
	Jie~Dong, 
	Rolf Johansson,~\IEEEmembership{Fellow,~IEEE}
	
	\thanks{Y. Li, C. Li, and J. Dong  are with the School of Automation and Electrical Engineering,
		University of Science and Technology Beijing, Beijing, 100083, P.~R.~China, and the Key Laboratory of Knowledge Automation for Industrial Processes, Ministry of Education, Beijing 100083, P.~R.~China. \protect E-mails: yuling@ustb.edu.cn, \protect lchenxi380223050@163.com, dongjie@ies.ustb.edu.cn. K. Liu is with the School of Automation, Beijing Institute of Technology, Beijing 100081, P.~R.~China.  \protect E-mail: kunliubit@bit.edu.cn (Corresponding author).
R. Johansson is with the Dept. Automatic Control, Lund University, P.O. Box 118, 22100 Lund, Sweden. He is member of the eLLIIT Excellence Center at Lund University. \protect E-mail: Rolf.Johansson@control.lth.se.}
\thanks{This work was jointly supported by the National Key Research and Development Program of China (No. 2023YFB4706900), the National Natural Science Foundation of China (No. 62573041, 62273041, U24A20264), the Open Projects of the Institute of Systems Science, the Beijing Natural Science Foundation (No. L252123), Beijing Wuzi University (BWUISS11, BWUISS44).}
}


\markboth{IEEE Transactions on Cybernetics}%
{Shell \MakeLowercase{\textit{et al.}}: Event-triggered control and communication for SMMS teleoperation systems with Try-Once-Discard protocol}


\maketitle

\begin{abstract}
Single-master multi-slave (SMMS) teleoperation systems can perform multiple tasks remotely in a shorter time, cover large-scale areas, and adapt more easily to single-point failures, thereby effectively encompassing a broader range of applications. As the number of slave manipulators sharing a communication network increases, the limitation of communication bandwidth becomes critical. To alleviate bandwidth usage, the Try-Once-Discard (TOD) scheduling protocol and event-triggered mechanisms are often employed separately. In this paper, we combine both strategies to optimize network bandwidth and energy consumption for SMMS teleoperation systems.
Specifically, we propose event-triggered control and communication schemes for a class of SMMS teleoperation systems using the TOD scheduling protocol. Considering dynamic uncertainties, the unavailability of relative velocities, and time-varying delays, we develop adaptive controllers with virtual observers based on event-triggered schemes to achieve master-slave synchronization. Stability criteria for the SMMS teleoperation systems under these event-triggered control and communication schemes are established, demonstrating that Zeno behavior is excluded. Finally, experiments are conducted to validate the effectiveness of the proposed algorithms.

\end{abstract}

\begin{IEEEkeywords}
	event-triggered control, event-triggered communication, teleoperation systems, scheduling protocol
\end{IEEEkeywords}

\normalsize
\section{Introduction}

\IEEEPARstart{T}{eleoperation}  systems are a kind of  robotic systems that transmit commands from a local manipulator to a remote manipulator, enabling the execution of desired tasks at a distance. These systems have been extensively applied in fields such as space and underwater exploration, telesurgery, and so on~\cite{Shahbazi2018}, \cite{yan2024survey,black2024TMRB}. As the complexity of remote tasks increases, relying on a single slave manipulator within teleoperation systems becomes increasingly challenging. Consequently, single-master multi-slave (SMMS) teleoperation systems (illustrated in Fig.\ref{fig
}) have gained significant attention \cite{Lu2023TASE,Shahbazi2018}.

\begin{figure}[!htb]
\centering
\includegraphics[width=0.9\linewidth]{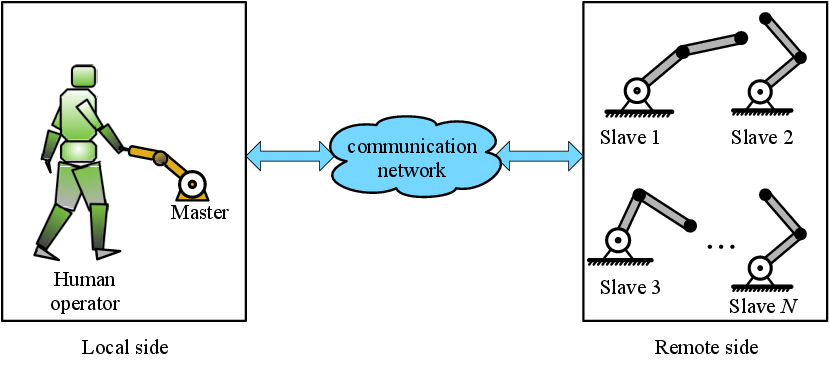}
\caption{The SMMS teleoperation system}
\label{fig
}
\end{figure}

As the number of slave manipulators sharing a communication network increases in SMMS teleoperation systems, the limitation of communication bandwidth becomes a critical concern. To enhance bandwidth utilization, various scheduling protocols have been introduced in teleoperation systems~\cite{li2019bilateral}. These protocols generally fall into three categories: Round-Robin (RR), Try-Once-Discard (TOD), and stochastic scheduling protocols~\cite{zou2017ultimate, cheng2021output}. The position synchronization problem for SMMS teleoperation systems under RR and TOD scheduling protocols has been thoroughly studied in~\cite{li2021distributed,li2019bilateral}.

To further reduce energy  consumption and  network workload, event-triggered mechanisms have also been  used in many networked systems~\cite{Peng2018survey,Dolk2024TAC}, such as static neural networks \cite{vadivel2023event}, fuzzy systems \cite{dong2020dynamic, Yang2023Tcy} non-homogeneous Markov
	switching systems \cite{chen2022an},  linear multi-agent systems \cite{liu2024Tcyber, Jin2024Tcyber}, networked Euler-Lagrange systems \cite{wei2021velocity, Ngo2021TCNS, Hao2024Tcyber}, and so on. These mechanisms can be applied to controller-to-actuator channels (known as event-triggered control mechanisms) \cite{Xu2024TASE} and sensor-to-network channels (known as event-triggered communication mechanisms) \cite{Hao2024Tcyber, liu2024Tcyber, Jin2024Tcyber}.

In event-triggered control strategies, control signals are updated only when a specific triggering condition is met. Similarly, in event-triggered communication strategies, current measurements are transmitted to the communication channel only when a triggering event occurs. Recently, event-triggered control and communication strategies have been widely explored in teleoperation systems~\cite{wang2021event, zhao2022adaptive, Xu2024MPC_TASE, Li2024AJC, Fu2023CEP, gu2024event}. For instance, an event-triggered prescribed-time fuzzy control strategy was developed to ensure the prescribed-time stability of the closed-loop system~\cite{wang2021event}. A fixed-time event-triggered control scheme was formulated for bilateral teleoperation systems, accounting for uncertain disturbances and asymmetric communication delays~\cite{zhao2022adaptive}. A novel self-triggered model predictive control framework, incorporating a high-order estimation mechanism, was proposed for the teleoperation of networked mobile robotic systems~\cite{Xu2024MPC_TASE}. An event-triggered finite-time variable gain active disturbance rejection control scheme was proposed for teleoperation parallel manipulators in~\cite{gu2024event}. In~\cite{Fu2023CEP}, event-triggered communication among multiple slave mobile robots was considered in SMMS teleoperation systems, and a predefined-time cooperative control scheme was proposed to synchronize the task-space position between the master and the slaves.

Limited communication bandwidth can lead to data collisions, particularly when multiple nodes compete for network access simultaneously. To address this issue, the simultaneous consideration of scheduling protocols and event-triggered mechanisms is essential. For example, in~\cite{dong2020dynamic, Yang2023Tcy}, event-triggering-based TOD was used to orchestrate data transmission for fuzzy systems, with model predictive controllers and sliding-mode controllers addressed, respectively. For nonhomogeneous Markov switching systems, an event-triggered RR protocol was proposed to reduce network workload in~\cite{chen2022an}. A dynamic event-based TOD protocol was proposed in~\cite{Cheng2022Tcy} for networked nonlinear systems. However, it should be noted that~\cite{dong2020dynamic, Yang2023Tcy, Cheng2022Tcy} did not specifically target teleoperation systems. For SMMS teleoperation systems, Li et al. also considered the event-triggered TOD protocol to improve network bandwidth utilization, but  assumed exact knowledge of the system dynamics~\cite{Li2024AJC}.

In practical applications, the dynamic models of manipulators are often not precisely known in advance, leading to dynamic uncertainties that can adversely affect system performance and even cause instability. To address these dynamic uncertainties, various advanced control techniques, including adaptive control, neural networks, and fuzzy logic, have been proposed in the literature. For example, a robust adaptive control algorithm for nonlinear teleoperation systems, leveraging the property of linearly parameterizable dynamics, was introduced in~\cite{kebria2020robust}. A globally stable adaptive fuzzy backstepping control scheme for bilateral teleoperation manipulators was developed in~\cite{chen2020adaptive}. The use of neural-network approximators for the control of uncertain teleoperation systems was explored in~\cite{li2021distributed}. However, these methods typically rely on relative velocities for control design, necessitating the real-time transmission of velocity information across the network. To overcome this challenge, a robust sliding control algorithm \textcolor{black}{that does not depend on relative velocities was developed for bilateral teleoperators in~\cite{liu2020on}}. Nonetheless,  the approach in \cite{liu2020on} does not  fully account for the constraints imposed by limited communication bandwidth and computational resources, presenting significant challenges for their implementation in practical teleoperation systems.

Motivated by the above observations, in this paper, we address both the event-triggered control and event-triggered communication problems for a class of SMMS teleoperation systems with the TOD scheduling protocol subject to dynamic uncertainties. It is noteworthy that the TOD protocol is superior in resource allocation compared to RR and stochastic scheduling protocols, as it fully considers the current system information during the scheduling process~\cite{dong2020dynamic}. However, the use of the TOD protocol introduces challenges in controller design due to the dynamic nature of data transmission order. Therefore, developing a model that simultaneously incorporates event-triggered mechanisms and the TOD scheduling protocol, along with devising control strategies that ensure the desired performance of SMMS teleoperation systems, is a significant challenge.
	 
	 The key contributions of this paper can be summarized as follows:

\begin{itemize}
	\item We propose two unified control frameworks for SMMS teleoperation systems that jointly address the challenges of TOD scheduling, event-triggered mechanisms, time-varying delays, and dynamic uncertainties. This comprehensive integration is rarely explored in prior work and addresses practical challenges in large-scale teleoperation systems.
	
	\item We design adaptive controllers based on virtual reference systems, which avoid differentiating discontinuous signals induced by TOD scheduling and event-triggering. This allows for effective parameter adaptation under uncertain dynamics without requiring real-time transmission of velocity signals, thereby overcoming limitations of existing adaptive and robust control approaches.
	
	\item We develop a novel stability analysis framework using discontinuous Lyapunov-Krasovskii functionals. This framework rigorously guarantees master-slave synchronization, ensures Zeno-free behavior of event-triggering mechanisms, and handles the discontinuities introduced by TOD scheduling and event-triggered communication.
	
\end{itemize}

\textit{Notations.} Let $T$ denote matrix transposition. $\mathbb{R}^n$ is the $n$-dimensional Euclidean space with vector norm $|\cdot|$, $\mathbb{R}^{n\times m}$ represents the set of $n\times m$  real matrices, $\mathbb{N}$ stands for the set of non-negative integers while $\mathbb{N}^+$ is the set of positive integers, respectively. $I$ is the identity matrix, and $*$ indicates symmetric entries in block matrices. A real symmetric matrix $P$ satisfies $P>0$ ($P<0$) if it is positive (negative) definite.
For any function $f: [0, \infty) \to \mathbb{R}^n$, define the $\mathcal{L}_\infty$-norm as $\|f\|_\infty := \sup_{t \ge 0} |f(t)|$, and the square of the $\mathcal{L}_2$-norm as $\|f\|_2^2 := \int_0^\infty |f(t)|^2 dt$. The $\mathcal{L}_\infty$ and $\mathcal{L}_2$ spaces consist of functions with finite $\|f\|_\infty$ and $\|f\|_2$, respectively. Time $t$ may be omitted when clear from context.

\section{Preliminaries and problem formulation}\label{sec:problem_formulation}

\subsection{Dynamic models}
The Euler-Lagrange equations of motion for the SMMS teleoperation system are given as follows \cite{kelly2005control,spong2020robot}: 
\begin{align}
&M_{m}(q_{m}) \ddot{q}_{m}+C_{m}(q_{m}, \dot{q}_{m})\dot{q}_{m}+G_{m}(q_{m})=f_{m}+\tau_{m}\label{eq:master}\\
&M_{s i}\left(q_{s i}\right) \ddot{q}_{s i}+C_{s i}\left(q_{s i}, \dot{q}_{s i}\right) \dot{q}_{s i}+G_{s i}\left(q_{s i}\right)=f_{s i}+\tau_{s i}\label{eq:slave}
\end{align}
where the subscripts $m$ and $si$ denote the master and the $i$th slave robot manipulator with $i=1,\dots, N$, respectively. For $z=m, s1, ....sN$,  $q_z, \dot{q}_z, \ddot{q}_z \in \mathbb{R}^{n}$  are the joint positions, velocities, acceleration vectors of the manipulators, respectively,  $M_z\in \mathbb{R}^{n \times n}$ is the inertia matrix, $C_{z}\left(q_{z}, \dot{q}_{z}\right)\in \mathbb{R}^{n \times n}$ represents the centripetal and coriolis torque, $G_z(q_z) \in \mathbb{R}^{n}$ embodies the gravity vector, $\tau_{z}\in \mathbb{R}^{n}$ is the applied control torque, $f_{z}\in\mathbb{R}^n$ is the external torque generated by the operator and the compliant environment interaction.
%
%

To facilitate the theoretical analysis in Section~\ref{sec:Controller Design and Stability Analysis}, the well-known properties of the robotic systems (\ref{eq:master})-(\ref{eq:slave}) with revolute joints are given as follows ( $z=m, s1, s2,..., sN$) \cite{kelly2005control,spong2020robot}:
\begin{enumerate}
	\item[P1]\cite{nuno2010an, wang2020differential}The inertia matrix $M_{z}\left(q_{z}\right)$ is positive-definite and there exist positive scalars $\lambda_{z}^{m}$, $\lambda_{z}^{M}$ such that $0<\lambda_{z}^{m} I \leq M_{z}\left(q_{z}\right) \leq \lambda_{z}^{M} I<\infty$.
	\item[P2]\cite{Lu2023TASE,nuno2010an, wei2021velocity, wang2020differential} $\forall \eta,q_{z}\in \mathbb{R}^{n}$, $\eta^{T}(\dot{M}_{z}(q_{z})-2 C_{z}(q_{z}, \dot{q}_{z}))\eta=0$.
	\item[P3]\cite{wang2021event} $\forall x,y,q_{z}\in \mathbb{R}^{n}$, there exists a positive
	scalar $c_{z}$ such that $\left|C_{z}(q_{z}, x) y\right| \leq c_{z}|x||y|$.
	\item[P4]\cite{nuno2010an} If $\ddot{q}_{z}$ and $\dot{q}_{z}$ are bounded, the time derivative of $C_{z}\left(q_{z}, \dot{q}_{z}\right)$ is bounded.
	\item[P5] \cite{wei2021velocity} The gravity force $G_z(q_z)$ is  bounded, i.e., there exists a positive constant $g_z$ such that $|G_z(q_z)|\leq g_z$.
	\item[P6]\cite{nuno2010an,wang2020differential} \label{property:P6} The dynamics in (\ref{eq:master}) and (\ref{eq:slave}) are linearly parameterizable as $M_z(q_z)x+C_z(q_z, \dot{q}_z)y-G_z(q_z)=Y_z(q_z, \dot{q}_z, x, y)\theta_z$, where $x, y \in \mathbb{R}^{n}$, $Y_z(q_z, \dot{q}_z, x, y)$ is the regressor and $\theta_z$ is a vector of unknown but constant parameters. 
\end{enumerate}

\begin{assumption}\label{amp:Y}
	For $z = m, s_1, s_2, ..., sN$, each element of $M_z, C_z, G_z, x, y$ is absolutely continuous and globally Lipschitz, which implies that $Y_z(q_z, \dot{q}_z, x, y)$ is absolutely continuous and globally Lipschitz. Hence, $Y_z(q_z, \dot{q}_z, x, y)$ and $\dot{Y}_z(q_z, \dot{q}_z, x, y)$ are bounded.
\end{assumption}

\subsection{Description of data transmission under communication constraints}


Generally, data are exchanged between the master and the slaves through a shared communication network. Due to the constraints of communication bandwidth, many unfavorable phenomena, such as network congestion and data collisions, cannot be avoided when large amounts of data are released simultaneously. To overcome these obstacles, a dynamic scheduling protocol, specifically the TOD protocol defined in Definition 1 below, is employed to orchestrate data transmission so that only one manipulator's data can be transmitted at a time. Since there is only one manipulator on the master side, the TOD protocol is applied only in the backward communication channel, i.e., the channel from the slaves to the master. By utilizing the TOD protocol, the transmission instants of the slaves, denoted as $s_0^s, s_1^s, \ldots, s_{k}^s, \ldots$ ($k \in \mathbb{N}$), at which information from one of the slaves is sent to the master, are generated. 
Let $x_{si}$ represent the data of the $i$th slave to be transmitted through the communication network. Hence, $x_{si}$ is sampled at specific instants, and the latest data transmitted to the master is characterized by  $\hat{x}_{si}$ using the TOD protocol, where only one manipulator is granted access to the communication network, while the information from the other slaves is held by zero-order holders (ZOHs),
  that is, 
\begin{eqnarray}\label{eq:updating_x_sk}
	\hat{x}_{s i}\left(s_{k}^s\right)=\left\{\begin{array}{c}{x_{s i}\left(s_{k}^s\right), i=i_{k}^{*}} \\ {\hat{x}_{s i}\left(s_{k-1}^s\right), i \neq i_{k}^{*}}\end{array}\right.
\end{eqnarray}
where ${i}_{k}^{*}$ is the active slave that obtains the access to the communication network. 
\begin{Definition} \label{def:TOD} 
	Let  $Q_{i}>0(i=1, \ldots, N)$ be the weighting matrices. At the transmission instant $s_{k}^s$, the TOD protocol is a protocol for which the active slave robot with the index $i_{k}^{*}$ is defined as any index that satisfies
	\begin{equation}\label{eq:weighting_matrices}
		\begin{aligned} \left|\sqrt{Q_{i_{k}^{*}}} \eta_{i_{k}^{*}}\right|^{2} &\geq \left|\sqrt{Q}_{i} \eta_{i}\right|^{2} , i= 1, \ldots, N\end{aligned}
	\end{equation}
	where 
	\begin{equation}\label{eq:greatest_error}
		\begin{aligned} 
			\eta_{i} &=\hat{x}_{si}\left(s_{k-1}^s\right)-x_{si}\left(s_{k}^s\right) \\ t & \in\left[s_{k}^s, s_{k+1}^s\right), \quad k \in \mathbb{N}, \hat{x}_{si}\left(\textcolor{black}{s_{-1}^s}\right)=0 \end{aligned}
	\end{equation}
	is the weighted transmission error.
\end{Definition}

\begin{assumption}\label{amp:sampling}
	The transmission intervals  are bounded, i.e, there exists a positive scalar $h$ such that
$
s_{k+1}^s-s_{k}^s\leq h$ $(k\in\mathbb{N})
$ where $h$ represents the maximum allowable transmission interval.
\end{assumption}
In this paper, we assume that the forward and the backward communication delays $T_m, T_{s}$ are unknown, asymmetric and time-varying. The assumption regarding the time delays is stated as follows:


\begin{assumption}
For $j=m, s$, there exist positive constants $d_j, p_j$ such that the unknown communication delays $T_j$ satisfy $0\leq T_j\leq d_j$ and $\dot{T}_j\leq p_j<1$. 
\label{amp:delay}
\end{assumption}

\subsection{Problem formulation}

For the teleoperation system with multiple slaves (\ref{eq:master})-(\ref{eq:slave}),  the coordinated motion between multiple slaves should be guaranteed by formation schemes, in which  the $i$th slave maintains a distance and orientation $\gamma_{i}$ from the formation's geometric center $\bar{q}_{s}:=\frac{1}{N} \sum_{i=1}^{N} q_{s i}$, where $\gamma_i\in\mathbb{R}^n$ is a constant vector, $\gamma_{i} \neq \gamma_{j}, \forall i \neq j, \sum_{i=1}^{N} \gamma_{i}=0$. Hence, this paper aims to achieve master-slave synchronization $\lim_{t\rightarrow\infty}\left(q_m-\bar{q}_s\right)=0$ for the teleoperation system (\ref{eq:master})-(\ref{eq:slave}) with time-varying delays and the TOD scheduling protocol by proposing an event-triggered control scheme and a control scheme under  event-triggered communication. 

\black{Specifically, this paper addresses control design problems under two different scenarios. } In the first case, the data to be transmitted for each manipulator are sampled by time-based samplers, and the sampled data are sent to the remote sides through the delayed communication network with the TOD protocol. The event-triggered control design is then addressed. In the second case, the manipulators' data are sampled by event-based samplers and then sent to the remote sides through the communication network with the TOD protocol, and the control design for the teleoperation system under the event-triggered communication is addressed. In the following section, these two issues are respectively addressed. 
\section{Main results} \label{sec:Controller Design and Stability Analysis}


\subsection{Event-triggered control}
In this section,  the event-triggered control design for the SMMS teleoperation system (\ref{eq:master})-(\ref{eq:slave})  is considered. 
The control framework is depicted in Fig.~\ref{fig:eventcontrolscheduling}.
\begin{figure}[tbh!]
	\centering
	\includegraphics[width=0.95\linewidth]{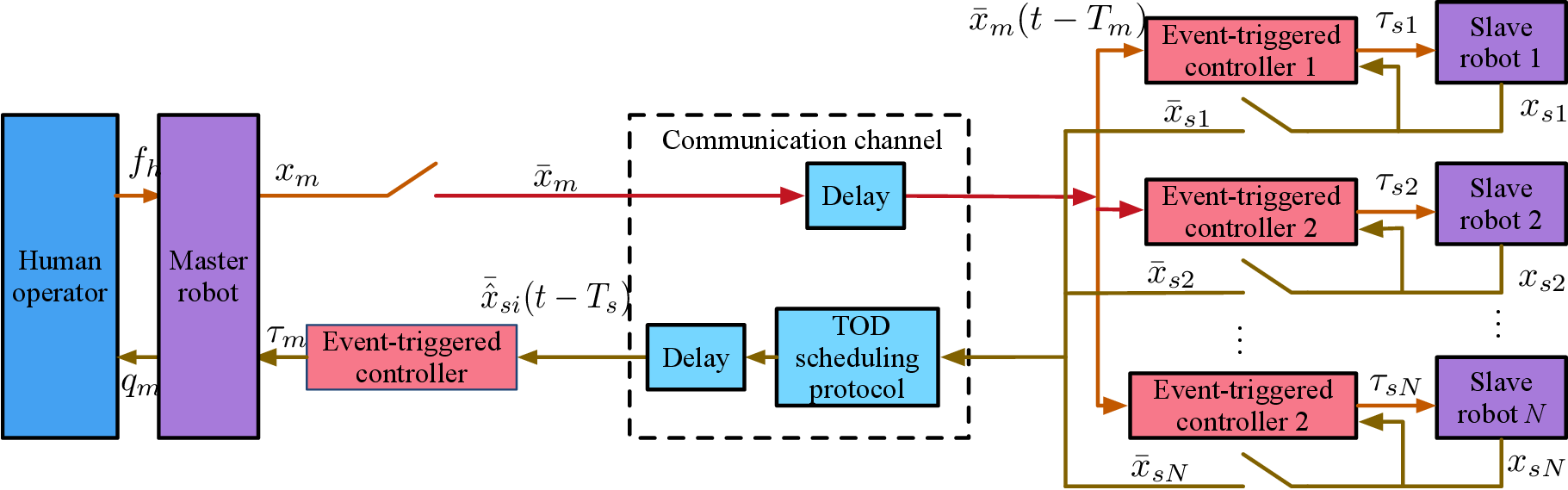}
	\caption{The event-triggered control framework for the SMMS teleoperation system with the TOD scheduling protocol.}
	\label{fig:eventcontrolscheduling}
\end{figure}

For simplicity, we assume that the data at the master and the slave sides are sampled and transmitted over the network synchronously at the instants  $s_{k}^s, k\in\mathbb{N}$ with $s_0^s = 0$. Due to the communication bandwidth limitation, on the slave side, only one slave's data are allowed to be transmitted to the master by the TOD protocol at each instant $s_{k}^s, k\in\mathbb{N}$. Thus, for every $k\in\mathbb{N}$, the active slave obtaining the access to the communication network shall satisfy (\ref{eq:weighting_matrices}).

Following \cite{wei2021velocity}, we construct a virtual system for each manipulator to facilitate the adaptive control design with sampled-data communication.
Specifically, for the master manipulator, we construct the following virtual system: 
\begin{align}
		\ddot{x}_m=&-\alpha_m\dot{x}_m-\kappa_m\left(x_m-q_m\right)\nonumber\\
		&-\beta_m\left(x_m-\frac{1}{N}\sum_{i=1}^{N}\bar{\hat{x}}_{si}(t-T_s)\right)\label{virtual_m_c}
\end{align}
where $x_m\in\mathbb{R}^n$ is the master observer's output, $\bar{\hat{x}}_{si}$ is the reconstructed signal of the transmitted data from the $i$th slave satisfying 
\begin{align}\label{eq:reconstructed_hat_xsi}
	\bar{\hat{x}}_{si}=\hat{x}_{si}(s_{k}^s), t\in[s_{k}^s, s_{k+1}^s)
\end{align}
with $\hat{x}_{si}$ given in (\ref{eq:updating_x_sk}), 
and $\alpha_m, \kappa_m, \beta_m$ are positive constants. 
Similarly, the $i$th slave's virtual observer is constructed as follows:  
\begin{align}\label{virtual_si_c}
		\ddot{x}_{si}=&\!-\!\alpha_{si}\dot{x}_{si}\!-\!\kappa_{si}\!\left(x_{si}\!-\!q_{si}\right)
		\!-\!\beta_{si}\left(\check{x}_{si}\!-\!\bar{x}_m(t-T_m)\right)
\end{align}
where $x_{si}\in\mathbb{R}^n$ is the output of the $i$th slave observer, $\check{x}_{si}=x_{si}-\gamma_i$, $\bar{x}_{m}$ is the reconstructed signal of the master observer's output $x_m$ and satisfies:
\begin{align}\label{eq:bar_x_m}
	\bar{x}_m = x_m(s_{k}^s), t\in[s_{k}^s,s_{k+1}^s)
\end{align}
$\alpha_{si}, \kappa_{si}, \beta_{si}, i = 1, 2, ..., N$ are positive constants.


Denote by $e_z$ the error between the actual joint position and the virtual system's output $x_z$:
\begin{equation}\label{errors_v}
	e_z \triangleq q_z-x_z, z \in \mathcal{M}
\end{equation}
where $\mathcal{M}=\{m, s1, ...., sN\}$. The following synchronization variables are proposed:
\begin{equation}\label{s_variable}
	r_z \triangleq \dot{q}_z+\lambda_z e_z, z \in \mathcal{M}
\end{equation}
\black{where $\gamma_z>0$ is a constant, }
and in accordance with Property~P6, it is obtained 
\begin{align}\label{eq:Y_z1}
	&Y_z(q_z\!,\! \dot{q}_z\!,\! e_z\!,\! \dot{e}_z)\theta_z\!=\! \lambda_z M_z(q_z)\dot{e}_z\!+\!\lambda_z{C}_z(q_z, \dot{q}_z)e_z\!-\!{G}_z
\end{align}
with $z\in\mathcal{M}$, $Y_z(q_z, \dot{q}_z, e_z, \dot{e}_z)$ is the regressor related to the variables $q_z, \dot{q}_z, e_z, \dot{e}_z$.

To characterize the event-triggered control mechanism,  let the triggering time sequence of the controllers be 
$0 =t_0<t_1<...< t_{k_z}...<\infty$, where $t_{k_z}$ denotes the $k_z$th event triggering instant of the manipulator $z$'s controller.

The event-triggered adaptive controllers for the master and the slaves are designed as
\begin{eqnarray}\label{tol_c}
	\left\{
	\begin{aligned}
		\tau_z&=-\kappa_zr_z(t_{k_z})\!-\!Y_z(t_{k_z})\hat{\theta}_z(t_{k_z}),  \\
		\dot{\hat{\theta}}_z&=\Gamma_z Y^{T}_z r_z, t\in[t_{k_{z}}, t_{k_z+1}), z\in\mathcal{M}
			\end{aligned}
		\right.
\end{eqnarray}
where  $Y_z\triangleq 	Y_z(q_z, \dot{q}_z, e_z, \dot{e}_z)$ is speicified in (\ref{eq:Y_z1}), $\kappa_z>0$ and $0< \Gamma_z \in \mathbb{R}^{n\times n}$, $\hat{\theta}_z$ is the estimate of $\theta_z$.  It is clearly observed that due to the event-triggered mechanism, the control inputs are held as constant for $t \in [t_{k_z}, t_{k_z+1})$. 

Before proposing the event-triggered conditions, we first define the triggering functions:
\begin{align}\label{eq:event_iota1}
\iota_z\!=\!|\rho_z|\!+\!\kappa_z|\varepsilon_z|\!-\! \dfrac{\gamma_z}{2}|r_z|\!-\!\epsilon_z e^{\!-\!\nu_z t}\!,\! z\!\in\!\mathcal{M}
\end{align}
where  $\gamma_z>0, \epsilon_z>0,$ and $\nu_z>0$ are constants, and  for $t \in [t_{k_z}, t_{k_z+1})$,
\begin{align}
\rho_z&=Y_z\hat{\theta}_z-Y_z(t_{k_{z}})\hat{\theta}_z(t_{k_{z}})\label{eq:Y_error}\\
\varepsilon_z&=r_z-r_z(t_{k_{z}})\label{eq:s_error}
\end{align}
Thus, the triggering condition is given  as follows:
\begin{align}\label{eq:event1}
	t_{k_z+1}=\inf\{t|t>t_{k_z}, \iota_z\geq 0\}
\end{align}

According to (\ref{eq:updating_x_sk}) and (\ref{eq:reconstructed_hat_xsi}), one has 
\begin{eqnarray}\label{eq:updating_x_sk1}
\bar{\hat{x}}_{si}=\left\{\begin{array}{l}{\bar{x}_{si}, i=i_{k}^{*}} \\ {\bar{x}_{si}+\eta_i, i \neq i_{k}^{*}}\end{array}\right.
\end{eqnarray}
where 
\begin{align}\label{eq:bar_x_si}
	\bar{x}_{si} &\triangleq x_{si}(s_{k}^s),  t\in[s_{k}^s, s_{k+1}^s)
\end{align}
 is the reconstructed signal from the sampling signal $x_{si}(s_{k}^s)$. 
Hence, one has for $t\in[s_{k}^s, s_{k+1}^s)$, 
\begin{align}
\bar{\hat{x}}_{s i}\left( t-T_s\right)
	=&\left\{\begin{array}{l}{\bar{x}_{s i}\left(t-T_s\right), i=i_{k}^{*}} \\ {\bar{x}_{si}(t-T_s)+\eta_i(t-T_s), i \neq i_{k}^{*}}\end{array}\right.\nonumber\\
		=&\left\{\begin{array}{l}{x_{s i}\left(t-D_s\right), i=i_{k}^{*}} \\ {{x}_{si}(t-D_s)+\eta_i(t-T_s), i \neq i_{k}^{*}}\end{array}\right.
		\label{eq:bar_x_si2}
\end{align}
where $D_s \triangleq t-s_{k}^s+T_s, t\in[s_{k}^s, s_{k+1}^s)$. Similarly, based on $\bar{x}_m=x_m(s_{k}^s), t\in[s_{k}^s, s_{k+1}^s)$, one has 
\begin{align}
	\bar{x}_m(t-T_m)=x_m(t-D_m), t\in[s_{k}^s, s_{k+1}^s) \label{eq:bar_x_m2}
\end{align}
with $D_m \triangleq  t-s_{k}^s+T_m, t\in[s_{k}^s, s_{k+1}^s)$. It is easily inferred from Assumptions \ref{amp:sampling}-\ref{amp:delay} that 
\begin{align} \label{D}
	0\leq D_j\leq h+d_j\triangleq h_M^j, j=m, s
\end{align}

%

Hence, 
based on (\ref{eq:master})-(\ref{eq:slave}), (\ref{tol_c}),  (\ref{eq:event_iota1})-(\ref{eq:event1}),  (\ref{eq:bar_x_si2}), and (\ref{eq:bar_x_m2}), the closed-loop system can be expressed as  
\begin{equation}\label{closed-loop_v_c}
	\left\{
	\begin{aligned} 
		f_m=&M_m(q_m) \dot{r}_m\!+\!C_m(q_m\!,\!\dot{q}_m) r_m\!+\!\kappa_m r_m
		-Y_m \tilde{\theta}_m\\&-\rho_m-\kappa_m\varepsilon_m\\ 
		f_{si}=&M_{si}(q_{si}) \dot{r}_{si}\!+\!C_{si}(q_{si}\!,\!\dot{q}_{si}) r_{si}\!+\!\kappa_{si} r_{si}-Y_{si} \tilde{\theta}_{si}\\&-\rho_{si}-\kappa_{si}\varepsilon_{si}\\
		\ddot{x}_m=&-\!\alpha_m\dot{x}_m\!-\!\beta_m\left(x_m\!-\!\frac{1}{N}\sum_{i=1}^{N}\!{x}_{si}(t\!-\!D_s)\right)\\
		&+\kappa_m e_m+\frac{\beta_m}{N}\sum_{i=1,i\neq i_k^*}^N\eta_{i}(t-T_s)\\
		\ddot{x}_{si}=&-\alpha_{si}\dot{x}_{si}-\beta_{si}\bigg(\check{x}_{si}-{x}_m(t-D_m)\bigg)\!+\!\kappa_{si}e_{si}\\
		\dot{\eta}_i=&0, t\in[s_{k}^s, s_{k+1}^s), i = 1,2,...,N
	\end{aligned}\right.
\end{equation}

where
 	$\tilde{\theta}_z=\theta_z-\hat{\theta}_z$, $z\in\mathcal{M}$.

By using (\ref{eq:bar_x_si2}), one obtains for $i=i_k^*\in\{1, 2, ..., N\}$,
\begin{align}
	\eta_{i}(s_{k+1}^s) \!=\! \hat{x}_{si}(s_{k}^s)-x_{si}(s_{k+1}^s)\!=\! x_{si}(s_{k}^s)-x_{si}(s_{k+1}^s)
\end{align}
	and for $i\neq i_k^*$, $i\in \{1, 2, ..., N\}$, 
	\begin{align}
\eta_{i} (s_{\!k+1\!}^s)\!=\!\hat{x}_{\!si\!}(s_{\!k\!}^s)\!-\!x_{\!si\!}(s_{\!k\!+\!1\!}^s)\!=\!\eta_i({s_{k}^s})\!+\!x_{\!si\!}(s_{k}^s)\!-\!x_{\!si\!}(s_{\!k\!+\!1\!}^s)
	\end{align}

Thus,  following \cite{li2019bilateral}, the delayed reset system is given by 
\begin{align}\label{eq:event_c_reset}
	\left\{
	\begin{aligned}
&q_{si}(s_{k+1}^s) = q_{si}(s_{k+1}^{s-})\\
&x_{si}(s_{k+1}^s) = x_{si}(s_{k+1}^{s-})\\
&\eta_{i} = [1-\delta(i, i_k^*))]\eta_i(s_{k}^s)+x_{si}(s_{k}^s)-x_{si}(s_{k+1}^s)\\
&k\in\mathbb{N}, i = 1, 2, ..., N
\end{aligned}
\right.
\end{align}
where $\delta$ is the Kronecker delta. 

Following the typical practice, we first define the Lyapunov-like function $\mathcal{V}_z=1/2r_z^TM_z(q_z, \dot{q}_z)r_z+1/2\tilde{\theta}_z^T \Gamma_z^{-1}\tilde{\theta}_z, \forall z \in \mathcal{M}$, where $0<\Gamma_z\in\mathbb{R}^{n\times n}$. The derivative of $V_z$ along the trajectories of the system (\ref{closed-loop_v_c}) can be written as (using Property~P2) $\dot{\mathcal{V}}_z=-\kappa_zr_z^Tr_z+r_z^T(\rho_z+\kappa_z\varepsilon_z+f_z) $.
Then,  the following Lyapunov-Krasovskii functional for the closed-loop system (\ref{closed-loop_v_c}) and (\ref{eq:event_c_reset}) is proposed  with $t \in [s_{k}^s, s_{k+1}^s)$:
\begin{equation}\label{V_v_c}
V=V_1+V_2+V_T
\end{equation}
where
\begin{align}
V_1=&\frac{N}{2\lambda_m}\mathcal{V}_m\!+\!\frac{N}{2}\kappa_me_m^Te_m\!+\!\frac{N}{2}\dot{x}_m^T\dot{x}_m+\frac{1}{2}\sum_{i=1}^{N}\bigg(\!\frac{\beta_m}{\beta_{si}\lambda_{si}}\!\mathcal{V}_{si}\nonumber\\
&\!+\!\frac{\beta_m}{\beta_{si}}\dot{x}_{si}^T\dot{x}_{si}\!+\!\beta_m(x_m\!-\!\check{x}_{si})^T\!(x_m\!-\!\check{x}_{si})\!+\!\frac{\beta_m\!\kappa_{si}}{\beta_{si}}e_{si}^T\!e_{si}\!\bigg)\nonumber
\end{align}
\begin{align*}
V_2=&N\int_{-h_M^m}^{0}\int_{t+v}^{t}\dot{x}_m^T(\sigma)R_m\dot{x}_m(\sigma)d\sigma dv\\
&+\sum_{i=1}^{N} \int_{-h_M^s}^{0}\int_{t+v}^{t}\dot{x}_{si}^T(\sigma)R_{si}\dot{x}_{si}(\sigma)\,d\sigma dv\\
		V_T=&\sum_{i\!=\!1, i \!\neq\! i_k^{*}}^{N}\left(\frac{1}{h^2}\int_{t-T_s}^t\eta_i^T(v)Z_i\eta_i(v)dv\!+\!\frac{s_{k}^s-t}{s_{k+1}^s\!-\!s_{k}^s} \eta_{i}^{T} U_{i} \eta_{i}\right)\\
		&+\sum_{i=1}^{N} \left(h \int_{s_{k}^s}^{t} \dot{x}_{si}^T(v) P_i \dot{x}_{si}(v)\,dv+\eta_i^TQ_i\eta_{i}\right)\nonumber
\end{align*}
with $R_m, R_{si}, Z_i, P_i, U_i, Q_i\in\mathbb{R}^{n\times n}$ as positive definite matrices. 

It is observed that for $t \in\left[s_{k}^s, s_{k+1}^s\right)$, the Lyapunov function $V$ is continuous and differentiable. At  \textcolor{black}{the instants $s_{k+1}^s, k\in\mathbb{N}$}, it can also be justified that $V$ is positive and doesn't grow, which is guaranteed by the following Lemma. 

\begin{Lemma}\label{LMIs_v}
	If there exist positive-definite matrices $Q_{i} \in \mathbb{R}^{n \times n}$, $U_{i} \in \mathbb{R}^{n \times n}$ and $P_{i} \in \mathbb{R}^{n \times n}$ such that the following linear-matrix-inequalities (LMIs) hold:
	\begin{align} \label{LMI_v}
	&\Omega_{i}\!=\!\left[\begin{array}{cc}{-\frac{1}{N-1} Q_{i}\!+\!U_{i}} & {Q_{i}} \\ {*} & {-P_{i}+Q_{i}}\end{array}\right]\!<\!0,  i\!=\!1, \ldots, N
	\end{align}
	then $V$ is positive in the sense that 
	\begin{align} \label{positivity of Ve_v}
	V\geq& a (|\dot{q}_m|^2\!+\!|\dot{q}_s|^2\!+\!|\dot{x}_m|^{2}\!+\!|\dot{x}_s|^{2}\!+\!|e_m|^{2}\!+\!|e_s|^{2}\!
	\nonumber\\&
	+\!|\tilde{\theta}_m|^2\!+\!|\tilde{\theta}_s|^2\!+\!|e|^{2})
	\end{align}
	where $\dot{q}_s \triangleq \operatorname{col}\left\{\dot{q}_{s1}, \ldots, \dot{q}_{sN}\right\}$, $\dot{x}_s \triangleq \operatorname{col}\left\{\dot{x}_{s1}, \ldots, \dot{x}_{sN}\right\}$, $e_s \triangleq \operatorname{col}\left\{e_{s1}, \ldots, e_{sN}\right\}$, $e \triangleq \operatorname{col}\left\{e_{1}, \ldots, e_{N}\right\}$ with $e_{i}=x_{m}-\check{x}_{si}$ for some $a>0$. Moreover, $V_e$ doesn't grow at  $s_{k+1}^s, k \in\mathbb{N}$:
	\begin{equation}\label{eq:Theta_v}
	V\left(s_{k+1}^s\right)-V\left(s_{k+1}^{s-}\right) \leq 0
	\end{equation}
	\begin{proof}
		The proof follows the same line of reasoning as that of Lemma 1 in \cite{li2019bilateral}.

	\end{proof}
\end{Lemma}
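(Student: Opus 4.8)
The plan is to establish the two claims of the lemma independently: the lower bound (\ref{positivity of Ve_v}), which is a pointwise statement on each interval $[s_k^s,s_{k+1}^s)$, and the non-growth inequality (\ref{eq:Theta_v}) at the transmission instants. Throughout I would exploit that, on $[s_k^s,s_{k+1}^s)$, the signal $\eta_i$ is constant (since $\dot\eta_i=0$), so I write $\eta_i^-:=\eta_i(s_k^s)=\eta_i(s_{k+1}^{s-})$ for the pre-reset value and $\eta_i^+:=\eta_i(s_{k+1}^s)$ for the post-reset value given by the reset map (\ref{eq:event_c_reset}).

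For positivity, I would first note $V_2\ge0$, being a double integral of the positive-definite forms $\dot x_m^TR_m\dot x_m$ and $\dot x_{si}^TR_{si}\dot x_{si}$. In $V_T$ the only sign-indefinite term is the clock term $\tfrac{s_k^s-t}{s_{k+1}^s-s_k^s}\eta_i^TU_i\eta_i$, whose scalar factor lies in $[-1,0]$ for $t\in[s_k^s,s_{k+1}^s)$; pairing it with $\eta_i^TQ_i\eta_i$ yields the lower bound $\eta_i^T(Q_i-U_i)\eta_i$, and the $(1,1)$-block of $\Omega_i<0$ forces $U_i<\tfrac1{N-1}Q_i\le Q_i$ (recall $N\ge2$), so that $V_T\ge0$. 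It then remains to extract the quadratic lower bound from $V_1$: by Property~P1 one has $M_z\ge\lambda_z^mI$ and $\Gamma_z^{-1}>0$, so each $\mathcal V_z$ dominates a multiple of $|r_z|^2+|\tilde\theta_z|^2$; substituting $\dot q_z=r_z-\lambda_ze_z$ and using Young's inequality, the $|r_z|^2$-mass together with the explicit $e_z$-, $\dot x_z$- and $(x_m-\check x_{si})$-terms of $V_1$ jointly dominate the right-hand side of (\ref{positivity of Ve_v}) for a sufficiently small $a>0$.

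For the non-growth property, I would argue that $V_1$, $V_2$ and the delay-type integrals $\tfrac1{h^2}\int_{t-T_s}^t\eta_i^TZ_i\eta_i\,dv$ are continuous in $t$ for each fixed index, because $q_z,\dot q_z,x_z,\dot x_z$ do not jump across $s_{k+1}^s$ and the integrals depend only on past trajectory values. Hence the increment $V(s_{k+1}^s)-V(s_{k+1}^{s-})$ reduces to the algebraic $U_i$- and $Q_i$-terms plus the loss incurred when the integral $h\int_{s_k^s}^t\dot x_{si}^TP_i\dot x_{si}\,dv$ is reset to zero. Writing $\Delta x_{si}:=x_{si}(s_k^s)-x_{si}(s_{k+1}^s)=-\int_{s_k^s}^{s_{k+1}^s}\dot x_{si}\,dv$ and applying Jensen's inequality gives $\Delta x_{si}^TP_i\Delta x_{si}\le h\int_{s_k^s}^{s_{k+1}^s}\dot x_{si}^TP_i\dot x_{si}\,dv$, so this loss absorbs $\Delta x_{si}^TP_i\Delta x_{si}$. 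Substituting the resets $\eta_{i_k^*}^+=\Delta x_{s i_k^*}$ and $\eta_i^+=\eta_i^-+\Delta x_{si}$ ($i\ne i_k^*$), each non-active index contributes $\xi_i^T\bigl[\begin{smallmatrix}U_i&Q_i\\ Q_i&Q_i-P_i\end{smallmatrix}\bigr]\xi_i$ with $\xi_i:=\operatorname{col}\{\eta_i^-,\Delta x_{si}\}$, which equals $\xi_i^T\Omega_i\xi_i+\tfrac1{N-1}|\sqrt{Q_i}\eta_i^-|^2\le\tfrac1{N-1}|\sqrt{Q_i}\eta_i^-|^2$ since $\Omega_i<0$, while the active index contributes $\Delta x_{s i_k^*}^T(Q_{i_k^*}-P_{i_k^*})\Delta x_{s i_k^*}-|\sqrt{Q_{i_k^*}}\eta_{i_k^*}^-|^2\le-|\sqrt{Q_{i_k^*}}\eta_{i_k^*}^-|^2$, where the $(2,2)$-block of $\Omega_i$ supplies $Q_i<P_i$.

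Summing over $i$ and invoking the TOD selection rule (\ref{eq:weighting_matrices}), which guarantees $|\sqrt{Q_{i_k^*}}\eta_{i_k^*}^-|^2\ge|\sqrt{Q_i}\eta_i^-|^2$ for every $i$ and hence $\tfrac1{N-1}\sum_{i\ne i_k^*}|\sqrt{Q_i}\eta_i^-|^2\le|\sqrt{Q_{i_k^*}}\eta_{i_k^*}^-|^2$, the increment is bounded above by $0$; this is exactly why the factor $\tfrac1{N-1}$ enters $\Omega_i$, matching the worst-case contraction of the greatest-error selection. I expect the main obstacle to be the bookkeeping caused by the change of the active index from $i_k^*$ to $i_{k+1}^*$ across the reset — in particular reconciling the index-dependent summation ranges of the $Z_i$- and $U_i$-terms and the delayed arguments therein — which is the step where the argument must track \cite{li2019bilateral} most closely.
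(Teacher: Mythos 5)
Your overall architecture is the right one, and it is the same line of reasoning the paper itself invokes (the paper's proof is a one-line deferral to Lemma 1 of \cite{li2019bilateral}): positivity via $V_2\ge 0$, the clock factor lying in $[-1,0]$ paired with $\eta_i^TQ_i\eta_i$ and the $(1,1)$-block of $\Omega_i$ (so $U_i<\tfrac{1}{N-1}Q_i\le Q_i$), and Property~P1 plus Young's inequality applied to $V_1$; non-growth via the reset substitutions $\eta_{i_k^*}^+=\Delta x_{si_k^*}$, $\eta_i^+=\eta_i^-+\Delta x_{si}$, Jensen's inequality on the $P_i$ integral, the decomposition $\bigl[\begin{smallmatrix}U_i&Q_i\\ Q_i&Q_i-P_i\end{smallmatrix}\bigr]=\Omega_i+\operatorname{diag}\{\tfrac{1}{N-1}Q_i,\,0\}$, and the TOD maximality of $|\sqrt{Q_{i_k^*}}\eta_{i_k^*}^-|^2$ to absorb $\tfrac{1}{N-1}\sum_{i\ne i_k^*}|\sqrt{Q_i}\eta_i^-|^2$. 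Those computations are correct, including the identification of why the factor $\tfrac{1}{N-1}$ appears in (\ref{LMI_v}).

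The gap is exactly the step you set aside as ``bookkeeping.'' You assert that the increment of $V$ across $s_{k+1}^s$ reduces to the algebraic $U_i$- and $Q_i$-terms plus the $P_i$ loss, on the grounds that the $Z_i$ integrals are continuous ``for each fixed index.'' But in $V_T$ those integrals are summed over $i\ne i_k^*$, and at the reset this index set changes to $i\ne i_{k+1}^*$. Whenever $i_{k+1}^*\ne i_k^*$, the increment therefore also contains
\begin{equation*}
\frac{1}{h^2}\int_{s_{k+1}^s-T_s}^{s_{k+1}^s}\Bigl(\eta_{i_k^*}^T(v)Z_{i_k^*}\eta_{i_k^*}(v)-\eta_{i_{k+1}^*}^T(v)Z_{i_{k+1}^*}\eta_{i_{k+1}^*}(v)\Bigr)\,dv,
\end{equation*}
which none of your estimates touch. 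This difference is sign-indefinite and in fact tends to be positive: on the most recent portion of the window, $\eta_{i_k^*}(v)=\eta_{i_k^*}^-$ is precisely the TOD-maximal weighted error, so the newly included integrand dominates the removed one whenever the $Z_i$ are comparable to the $Q_i$. Crucially, the hypothesis (\ref{LMI_v}) imposes no condition at all on $Z_i$, and the only strict slack in your chain, $\sum_{i\ne i_k^*}\xi_i^T\Omega_i\xi_i<0$, lives in the variables $(\eta_i^-,\Delta x_{si})$ rather than in the delayed window values, so it cannot absorb a term that scales with $Z_{i_k^*}$. Closing this requires a genuine idea rather than index bookkeeping --- e.g., a clock-type weight on the $Z_i$ integral that vanishes as $t\to s_{k+1}^{s-}$, an index-independent $Z$-term summed over all $i$ (with a corresponding change in the derivative analysis), or an explicit condition linking the $Z_i$ to the $Q_i$ and the TOD slack --- and it is precisely the point where one must consult how \cite{li2019bilateral} constructs its functional, since (\ref{eq:Theta_v}) does not follow from (\ref{LMI_v}) together with the argument you give.
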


The main result on the stability of the closed-loop system (\ref{closed-loop_v_c}) is summarized in the following theorem.

\begin{Theorem} \label{thm:Theorem_c}
	Consider the closed-loop teleoperation system (\ref{closed-loop_v_c})-(\ref{eq:event_c_reset}) with the event-triggered adaptive controller (\ref{tol_c}) under the TOD scheduling protocol (\ref{eq:weighting_matrices}). If there exist positive-definite matrices $R_{m}, R_{si}, P_{i}, U_{i}, Q_i, Z_i$,  $i = 1, 2, \ldots, N$ such that the LMIs (\ref{LMI_v}) and 
	\begin{align} \label{lmi_c}
			&\Xi_i
			\!=\!\begin{bmatrix}
			 {\Xi_{i}^{11}} & {0}  & {0} &\Xi_{i}^{14}& {0} & {\Xi_i^{16}} \\
			 {*} & {\Xi_{i}^{22}}  & \Xi_i^{23} & {0} & {0} & {0} \\ 
		 {*} & {*} &{\Xi_{i}^{33}} & {0} & {0} & {0} \\ 
		 {*} & {*} & {*} & {\Xi_{i}^{44}} & {0} & {0} \\ 
		 {*} & {*} & {*} & {*} & {\Xi_i^{55}} & {0} \\  
		{*} & {*}  & {*} & {*} & {*} & {\Xi_i^{66}}
			\end{bmatrix}
			<0
			\\&
			\black{\kappa_{m}>\gamma_m, \kappa_{si}>\gamma_{si}}, i = 1, 2, \dots, N\label{eq:lmi_control_2}
		\end{align}
	are satisfied,  	where
			\begin{align*}
				\Xi_{i}^{11}&=-\alpha_mI+h_M^mR_m, \Xi_i^{14}=-\frac{\beta_mh_M^s}{2}I\\
	\Xi_i^{16}&=\operatorname{row}_{j=1, \ldots, N}\left\{\frac{\beta_mh}{2}I, j \neq i\right\}\\
				\Xi_{i}^{22}&=-\frac{\beta_m\alpha_{si}}{\beta_{si}}I+h_M^sR_{si}+hP_i, \Xi_i^{23}=-\frac{\beta_mh_M^m}{2}I\\
				\Xi_{i}^{33}&=-h_M^mR_m, \Xi_{i}^{44}=-h_M^sR_{si}\\ 
				\Xi_i^{55}&=-\operatorname{diag}_{j=1, \ldots, N}\left\{hU_{j}-Z_j, j \neq i\right\}\\
				\Xi_i^{66}&=- \operatorname{diag}_{j=1, \ldots, N}\left\{(1-p_s)Z_j, j \neq i\right\}\nonumber
			\end{align*}

 then the following claims hold:
		\begin{enumerate}
			\item[(1)] if the considered teleoperation system is in free motion, i.e. the human force $f_m$ and the environment forces $f_{si}$ are zero, then all the signals are bounded, the system (\ref{closed-loop_v_c}) is asymptotically stable and $\lim_{t \rightarrow \infty}\left(x_m-\check{x}_{si}\right)=\lim_{t \rightarrow \infty}\dot{x}_m=\lim_{t \rightarrow \infty}\dot{x}_{si}=\lim_{t \rightarrow \infty}\left(q_m-\check{q}_{si}\right)=\lim_{t \rightarrow \infty}\dot{q}_m=\lim_{t \rightarrow \infty}\dot{q}_{si}=0$, $i=1,\ldots, N$, $\lim_{t \rightarrow \infty}e_m =\lim_{t \rightarrow \infty}e_{si}=0 $;
			\item[(2)] if $f_m, f_{si} \in \mathcal{L}_{2} \cap \mathcal{L}_{\infty} $, then all the signals are bounded, and $\lim_{t \rightarrow \infty}\left(x_m-\check{x}_{si}\right)=\lim_{t \rightarrow \infty}\dot{x}_m=\lim_{t \rightarrow \infty}\dot{x}_{si}=\lim_{t \rightarrow \infty}\left(q_m-\check{q}_{si}\right)=\lim_{t \rightarrow \infty}\dot{q}_m=\lim_{t \rightarrow \infty}\dot{q}_{si}=\lim_{t \rightarrow \infty}e_m =\lim_{t \rightarrow \infty}e_{si}=0$;
			\item[(3)] Zeno behavior is excluded in the presented control scheme.
	\end{enumerate}
\end{Theorem}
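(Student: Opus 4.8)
The plan is to prove all three claims through a hybrid Lyapunov–Krasovskii argument built on the functional $V$ of (\ref{V_v_c}). On each continuity interval $[s_k^s,s_{k+1}^s)$ I would show that $\dot V$ is bounded above by a negative (semi)definite quadratic form plus an integrable residual, while Lemma~\ref{LMIs_v} already guarantees that $V$ is positive and does not increase across the reset instants $s_{k+1}^s$ (inequality (\ref{eq:Theta_v})). Combining the continuous decrease with the no-growth condition at the jumps yields boundedness of $V$, hence of every closed-loop signal through the coercivity estimate of Lemma~\ref{LMIs_v}, and the asymptotic limits then follow from a Barbalat-type argument.

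First I would differentiate $V$ along (\ref{closed-loop_v_c}). Using Property~P2 the synchronization part reproduces the stated identity $\dot{\mathcal V}_z=-\kappa_z r_z^Tr_z+r_z^T(\rho_z+\kappa_z\varepsilon_z+f_z)$ for each $z\in\mathcal M$. The inter-event excitation $\rho_z+\kappa_z\varepsilon_z$ in (\ref{eq:Y_error})–(\ref{eq:s_error}) is controlled by the triggering rule (\ref{eq:event1}): since $\iota_z<0$ on $[t_{k_z},t_{k_z+1})$ we have $|\rho_z|+\kappa_z|\varepsilon_z|<\tfrac{\gamma_z}{2}|r_z|+\epsilon_z e^{-\nu_z t}$, so that $r_z^T(\rho_z+\kappa_z\varepsilon_z)\le \tfrac{\gamma_z}{2}|r_z|^2+\epsilon_z e^{-\nu_z t}|r_z|$. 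Invoking $\kappa_z>\gamma_z$ from (\ref{eq:lmi_control_2}) together with Young's inequality, the synchronization terms are dominated by $-\tfrac12(\kappa_z-\tfrac{\gamma_z}{2})|r_z|^2$ plus an exponentially decaying, integrable remainder and the force cross-term $r_z^Tf_z$.

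Next I would process the observer, delay, and protocol terms. Differentiating $V_1$ produces the couplings between $\dot x_m,\dot x_{si},e_z$ and the delayed outputs $x_{si}(t-D_s),x_m(t-D_m)$; the $V_2$ terms supply $h_M^{m}\dot x_m^TR_m\dot x_m$ and $h_M^{s}\dot x_{si}^TR_{si}\dot x_{si}$ together with integrals that, after bounding by Jensen's inequality and using (\ref{D}), absorb the delayed differences; and $\dot V_T$ supplies the $hP_i$, $U_i$, $Q_i$, and $Z_i$ contributions attached to the held transmission errors $\eta_i$, where I would exploit $\dot\eta_i=0$ on the interval and Assumption~\ref{amp:delay} to handle $\eta_i(t-T_s)$. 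Collecting these into the augmented vector formed from $\dot x_m,\dot x_{si}$, the delayed differences, and $\{\eta_i\}_{i\ne i_k^*}$, the negative-definiteness of $\Xi_i$ in (\ref{lmi_c}) renders this portion a negative quadratic form, so overall $\dot V\le -\mathcal W(t)+\phi(t)+\sum_{z}r_z^Tf_z$ with $\mathcal W\ge0$ positive definite in $(r_z,\dot x_z,\eta_i,\ldots)$ and $\phi\in\mathcal L_1$. For claim~(1) the force term vanishes, so integrating and using Lemma~\ref{LMIs_v} at the resets gives $V$ bounded and $\int_0^\infty\mathcal W<\infty$; Properties~P3–P5 and Assumption~\ref{amp:Y} make $\mathcal W$ uniformly continuous, whence Barbalat's lemma yields $r_z\to0$, $\dot x_m\to0$, $\dot x_{si}\to0$, and $\eta_i\to0$, and the remaining limits ($e_z\to0$ via (\ref{errors_v})–(\ref{s_variable}), $x_m-\check x_{si}\to0$, $q_m-\check q_{si}\to0$, $\dot q_z\to0$) follow from the observer dynamics in (\ref{closed-loop_v_c}). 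For claim~(2), since $f_z\in\mathcal L_\infty\cap\mathcal L_2$ I would absorb each $r_z^Tf_z$ by Young's inequality into $-\mathcal W$ at the cost of an extra $\tfrac{1}{2\mu_z}|f_z|^2\in\mathcal L_1$ term, after which the identical integrate-and-Barbalat argument applies.

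For claim~(3) I would use that all signals are already bounded to bound uniformly the growth rates $|\dot\varepsilon_z|=|\dot r_z|$ and $|\dot\rho_z|=|\dot Y_z\hat\theta_z+Y_z\dot{\hat\theta}_z|$ via P3–P5 and Assumption~\ref{amp:Y}; since $\rho_z,\varepsilon_z$ reset to zero at each $t_{k_z}$ they grow at most linearly, whereas the threshold in (\ref{eq:event_iota1}) retains the strictly positive term $\epsilon_z e^{-\nu_z t}$, which on any finite horizon $[0,T]$ is bounded below by $\epsilon_z e^{-\nu_z T}>0$, forcing a uniform positive lower bound on the inter-event times and excluding Zeno behaviour. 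The step I expect to be the main obstacle is the assembly of $\dot V$ into the single matrix inequality $\Xi_i<0$: correctly matching the cross-terms from the delayed observer couplings and, especially, the discontinuous TOD contributions $\eta_i$ (held on $[s_k^s,s_{k+1}^s)$ and reset per (\ref{eq:event_c_reset})) to the block structure of $\Xi_i$, while reconciling the continuous-time decrease with the no-growth condition (\ref{eq:Theta_v}), is where the bookkeeping is most delicate.
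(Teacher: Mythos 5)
Your proposal is correct and follows essentially the same route as the paper's own proof: the same Lyapunov--Krasovskii functional with Lemma~\ref{LMIs_v} handling positivity and the reset instants, the same use of the triggering condition plus Young's inequality and $\kappa_z>\gamma_z$ to dominate $r_z^T(\rho_z+\kappa_z\varepsilon_z)$, the same assembly of the delay/TOD terms into the quadratic form governed by $\Xi_i<0$, the same integrate-and-Barbalat argument (with Young's inequality absorbing $r_z^Tf_z$ for Claim~2), and the same Zeno-exclusion logic (bounded growth rate of the triggering error, which resets to zero at events, against the strictly positive threshold $\epsilon_z e^{-\nu_z t}$ on any finite horizon). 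No substantive deviation or gap to report.
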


\begin{proof}
	
	Firstly, suppose that the manipulators are in free motion. The derivative of $V_1, V_2$ along the trajectory of the system for $t>0$ are 
	\begin{align}
	\dot{V}_1=&-\frac{N\kappa_m}{2\lambda_m}\dot{q}_m^T\dot{q}_m-\frac{N\kappa_m\lambda_m}{2}e_m^Te_m-N\alpha_m\dot{x}_m^T\dot{x}_m \nonumber\\
	&+\sum_{i=1}^{N}\!\bigg(\!\!-\!\frac{\beta_m\!\kappa_{si}}{2\beta_{si}\!\lambda_{si}}\dot{q}_{si}^T\dot{q}_{si}\!-\!\frac{\beta_m\!\kappa_{si}\!\lambda_{si}}{2\beta_{si}}e_{si}^Te_{si} 
	\!-\!\frac{\beta_m\!\alpha_{si}}{\beta_{si}}\dot{x}_{si}^T\!\dot{x}_{si}\!\bigg)
	\nonumber\\&
	-\beta_m\sum_{i=1}^{N}\!\bigg(\!\dot{x}_{si}^T L_m+\dot{x}_m^T L_{si}\bigg)\!+\!\frac{N}{2\lambda_m}r_m^T(\rho_m\!+\!\kappa_m\varepsilon_m) \nonumber\\
	&\!+\!\sum_{i=1}^{N}\!\!\frac{\beta_m}{2\beta_{si}\!\lambda_{si}}r_{si}^T(\rho_{si}\!+\!\kappa_{si}\varepsilon_{si})
	\!+\!\beta_m\dot{x}_m^T\!\!\sum_{i\!=\!1,i\!\neq\! i_k^*}^N\!\!\eta_{i}(t\!-\!T_s)\nonumber\\
	\dot{V}_2 \le &Nh_M^m\dot{x}_m^T R_m \dot{x}_m-\frac{N}{h_M^m}L_m^TR_mL_m\nonumber \\
	&+\sum_{i=1}^{N}\left(h_M^s\dot{x}_{si}^T R_{si}\dot{x}_{si}-\frac{1}{h_M^s}L_{si}^TR_{si}L_{si}\right) \nonumber 
	\end{align}
where 
$ L_{m} \!=\!\int_{t\!-\!D_m}^{t}\dot{x}_m(v)dv$,
$L_{si}\! =\!\int_{t\!-\!D_{s}}^{t}\dot{x}_{si}(v)dv$.

	For $t\in[s_{k}^s, s_{k+1}^s)$, one has 
	\begin{align}
		\dot{V}_T\leq& \sum_{i=1, i \neq i_{k}^{*}}^N\eta_i^T\left(-\frac{U_i}{h}+\frac{Z_i}{h^2}\right)\eta_i+h	\sum_{i=1}^N\dot{x}_{si}P_i\dot{x}_{si}\nonumber\\
		&\!+\!\sum_{i=1, i \neq i_{k}^{*}}^N\frac{1-p_s}{h^2}\eta_i^T(t-T_s\!)Z_i\eta_i(t-T_s)
	\end{align}
	According to the proposed event-triggered condition (\ref{eq:event1}), we get that $|\rho_{z}|+\kappa_{z}|\varepsilon_{z}|\black{\leq} {\gamma_{z}}|r_{z}|/2+\epsilon_{z} e^{-\nu_{z} t}$ for \black{$t\geq 0, z\in\mathcal{M}$}, Thus, applying the Young's inequality, one has 
	\black{
	 \begin{align}
	 r_z^T(\rho_z\!+\!\kappa_z\varepsilon_z)
	 \leq &|r_z|(|\rho_z|+\kappa_z|\varepsilon_z|)\nonumber\\
	 \leq &(\gamma_z+\bar{\gamma}_z)\left(|\dot{q}_z|^2\!+\!\lambda_z^2|e_z|^2\right)\!+\!\frac{\epsilon_z^2}{2\bar{\gamma}_z}e^{\!-\!2\nu_z t}
	 \end{align}}
\black{for any $\bar{\gamma}_z>0$, $t\geq 0$, and $z\in\mathcal{M}$}. 

\black{Denote $
				\phi_{m}\triangleq\frac{\kappa_m-\gamma_m-\bar{\gamma}_m}{2\lambda_m}, \phi_{si}\triangleq \frac{\beta_m(\kappa_{si}-\gamma_{si}-\bar{\gamma}_{si})}{2\beta_{si}\lambda_{si}},							\psi_{m} \triangleq\lambda_m^2\phi_m, \psi_{si}\triangleq\lambda_{si}^2\phi_{si}$, 
Therefore, we obtain that  $\phi_{m}, \phi_{si}, \psi_{m}, \psi_{si}>0$ from (\ref{eq:lmi_control_2}).} Hence, for $t\in[s_{k}^s, s_{k+1}^s)$,  one has
\begin{align} \label{eq:V_e_final}
	\dot{V} \!\le\!& \sum_{i=1}^{N}\!\bigg( \!-\!\phi_{m}\dot{q}_m^T\dot{q}_m\!-\!\phi_{si}\dot{q}_{si}^T\dot{q}_{si}\!-\!\psi_{m}e_m^Te_m
	-\psi_{si}e_{si}^Te_{si}\nonumber\\
	&+\chi_{i}^T\underline{\Xi}\Xi_i\underline{\Xi}\chi_i+\frac{\epsilon_m^2 e^{-2\nu_m t}}{4\lambda_m\gamma_m}+\frac{\beta_m\epsilon_{si}^2 e^{-2\nu_{si}t}}{4\beta_{si}\lambda_{si}\gamma_{si}}\bigg)
\end{align}
where  $\underline{\Xi}=\text{diag}\{1,1,1/h_M^m, 1/h_M^s,1/h,1/h\}\black{\otimes I}$,

$
\chi_{i}=\operatorname{col}\left\{ \dot{x}_m, \dot{x}_{si},  L_m, L_{si}, \varrho_{i_{k}^*}, \varsigma_{i_{k}^*}\right\}
$
, 
$\varrho_{i}=\operatorname{col}_{j=1,2, \ldots, N}\left\{\eta_{j}, j \neq i\right\}$, $
\varsigma_{i}=\operatorname{col}_{j=1,2, \ldots, N}\left\{\eta_{j}(t-T_s), j \neq i\right\}$.

Let $\xi_i\triangleq-\lambda_{\max}(\underline{\Xi}\Xi_i\underline{\Xi})$. According to (\ref{lmi_c}), one has $\xi_i > 0$. Therefore, $\dot{V}$ can be rewritten for $t\in[s_k^s,s_{k+1}^s)$ as
\begin{align}\label{V_e}
\dot{V}\! \le\!& \sum_{i=1}^{N}\bigg( \!-\!\phi_{m}|\dot{q}_m|^2-\phi_{si}|\dot{q}_{si}|^2\!-\!\xi_i|\chi_{i}|^2\!-\!\psi_{m}|e_m|^2\!\!-\!\!\psi_{si}|e_{si}|^2\nonumber\\
&+\frac{\epsilon_m^2 e^{-2\nu_m t}}{4\lambda_m\black{\bar{\gamma}_m}}+\frac{\beta_m\epsilon_{si}^2 e^{-2\nu_{si}t}}{4\beta_{si}\lambda_{si}\black{\bar{\gamma}_{si}}}\bigg)
\end{align}
Integrating both sides of (\ref{V_e}) from $s_k$ to $t$, $t \in [s_k^s, s_{k+1}^s)$,  one has
\begin{align}\label{vet1}
V \leq & V(s_k^s)-\sum_{i=1}^{N}  \int_{s_k^s}^{t}\left(\phi_{m}|\dot{q}_m(\sigma)|^2+\phi_{si}|\dot{q}_{si}(\sigma)|^2\right)\, d\sigma\nonumber\\
&-\sum_{i=1}^{N}  \int_{s_k^s}^{t}\left(\psi_{m}|e_m(\sigma)|^2+\psi_{si}|e_{si}(\sigma)|^2+\xi_i|\chi_{i}(\sigma)|^{2}\right)\, d\sigma\nonumber\\&+\sum_{i=1}^N \int_{s_k}^{t}\left(\frac{\epsilon_m^2 e^{-2\nu_m \sigma}}{4\lambda_m\black{\bar{\gamma}_m}}+\frac{\beta_m\epsilon_{si}^2 e^{-2\nu_{si} \sigma}}{4\beta_{si}\lambda_{si}\black{\bar{\gamma}_{si}}}\right)\,d\sigma
\end{align}
Substituting $t=s_{k+1}^{s^{-}}$ into (\ref{vet1}), and using (\ref{eq:Theta_v}), we get
\begin{align} \label{vet2_v1}
V(s_{k+1}^s) \leq & V(0)\!-\!\sum_{i=1}^{N}  \int_{0}^{s_{k+1}^s}\left(\phi_{m}|\dot{q}_m(\sigma)|^2\!+\!\phi_{si}|\dot{q}_{si}(\sigma)|^2\right)\, d\sigma\nonumber\\
&-\sum_{i=1}^{N}  \int_{0}^{s_{k+1}^s}\left(\psi_{m}|e_m(\sigma)|^2+\psi_{si}|e_{si}(\sigma)|^2\right)\, d\sigma\nonumber\\
&+ \sum_{i=1}^{N} \int_{0}^{s_{k+1}^s}\left(\frac{\epsilon_m^2 e^{-2\nu_m \sigma}}{4\lambda_m\black{\bar{\gamma}}_m}+\frac{\epsilon_{si}^2 e^{-2\nu_{si} \sigma}}{4\lambda_{si}\black{\bar{\gamma}}_{si}}\right)\,d\sigma\nonumber\\
&-\sum_{i=1}^{N}  \int_{0}^{s_{k+1}^s}\xi_i|\chi_{i}(\sigma)|^{2}\, d\sigma
\end{align} 

Finally, based on (\ref{vet1}) and (\ref{vet2_v1}), we have for all $t\geq 0$

\begin{align} \label{vet4_v}
V \!\leq\! & V(0)\!+\! \Delta_1\!-\!\sum_{i=1}^{N} \!\! \int_{0}^{t}\!\!\left(\!\phi_{m}|\dot{q}_m(\sigma)|^2\!+\!\phi_{si}|\dot{q}_{si}(\sigma)|^2\!\right)\!\, d\sigma\nonumber\\
&\!-\!\sum_{i=1}^{N} \!\! \int_{0}^{t}\!\!\left(\!\psi_{m}|\!e_m(\sigma)\!|^2\!+\!\psi_{si}|\!e_{si}(\sigma)\!|^2\!+\!\xi_i|\!\chi_{i}(\sigma)\!|^{2}\!\right)\!\, d\sigma
\end{align}
with
$
\Delta_1=\sum_{i=1}^{N} \int_{0}^{t}\left(\frac{\epsilon_m^2e^{-2\nu_m \sigma}}{4\lambda_m\black{\bar{\gamma}}_m}+\frac{\beta_m\epsilon_{si}^2 e^{-2\nu_{si} \sigma}}{4\beta_{si}\black{\bar{\gamma}}_{si}\gamma_{si}}\right)\,d\sigma \nonumber
$.

Thus, (\ref{vet4_v}) clearly implies that $V \in \mathcal{L}_{\infty}$ and $\dot{q}_m$, $\dot{q}_{si}$, $e_m$, $e_{si}$, $\chi_{i} \in \mathcal{L}_{2}$ which means $ \dot{x}_m, \dot{x}_{si}, \eta_{i} \in \mathcal{L}_{2}$. $V \in \mathcal{L}_{\infty}$ and the fact (\ref{positivity of Ve_v}) show that $\dot{q}_m, \dot{q}_{si}, \dot{x}_m, \dot{x}_{si}, e_m, e_{si}, x_m-\check{x}_{si} \in \mathcal{L}_{\infty}$, which further implies that $\eta_i\in\mathcal{L}_\infty$.  Since $\dot{e}_m=\dot{q}_m-\dot{x}_m \in \mathcal{L}_{\infty}$ and $\dot{e}_{si}=\dot{q}_{si}-\dot{x}_{si} \in \mathcal{L}_{\infty}$, then invoking standard Barbalat's Lemma, it is concluded that $e_m \rightarrow 0$ as $t \rightarrow \infty$ and $e_{si} \rightarrow 0$ as $t \rightarrow \infty$. 

Next, we proceed to verify that the master-slaves position synchronization of virtual systems (\ref{virtual_m_c})-(\ref{eq:bar_x_m}) can be achieved. By the fact that $\dot{q}_m, \dot{q}_{si}, \dot{x}_m, \dot{x}_{si}, e_m, e_{si}, x_m-\check{x}_{si}, \eta_i$ are bounded, we have $\ddot{q}_m, \ddot{q}_{si}, \ddot{x}_m, \ddot{x}_{si}$ are bounded for $ t \in [s_k^s, s_{k+1}^s)$ from the closed-loop teleoperation system (\ref{closed-loop_v_c}). Consequently, we conclude that $\dot{q}_m \rightarrow 0$ as $t \rightarrow \infty$, $\dot{q}_{si} \rightarrow 0$ as $t \rightarrow \infty$, $\dot{x}_m \rightarrow 0$ as $t \rightarrow \infty$ and $\dot{x}_{si} \rightarrow 0$ as $t \rightarrow \infty$ by the generalized Barbalat's Lemma \cite{li2021distributed,wang2020differential}. By the differentiation of $\ddot{x}_m$ and $\ddot{x}_{si}$ on both sides respectively, we have $\dddot{x}_m, \dddot{x}_{si}$ are bounded for $ t \in [s_k^s, s_{k+1}^s)$ with the boundedness of $\ddot{x}_m, \ddot{x}_{si}, e_m, e_{si}, x_m-\check{x}_{si}$. Therefore, based on the generalized Barbalat's Lemma, we know that $\ddot{x}_m \rightarrow 0$ as $t \rightarrow \infty$ and $\ddot{x}_{si} \rightarrow 0$ as $t \rightarrow \infty$. Then, from (\ref{closed-loop_v_c}), we have $x_m-\check{x}_{si} \rightarrow 0$ as $t \rightarrow \infty$, that is, $x_m-(x_{si}-\gamma_i) \rightarrow 0$ as $t \rightarrow \infty$. By the fact that $e_m \rightarrow 0$ as $t \rightarrow \infty$ and $e_{si} \rightarrow 0$ as $t \rightarrow \infty$, it is concluded that $q_m-\check{q}_{si} \rightarrow 0$ as $t \rightarrow \infty$.

	If $f_m, f_{si} \in \mathcal{L}_{2} \cap \mathcal{L}_{\infty}$, $\dot{V}$ can be rewritten as 
	\begin{align} \label{eq:V_e_final-f}
	\dot{V} \le& \sum_{i=1}^{N}\bigg( -\phi_{m}\dot{q}_m^T\dot{q}_m-\phi_{si}\dot{q}_{si}\dot{q}_{si}
	-\psi_{m}e_m^Te_m-\psi_{si}e_{si}^Te_{si}\nonumber\\&
	+\chi_{i}^T\Xi_i\chi_i+\frac{1}{2\lambda_m}r_m^Tf_m+\frac{\beta_m}{2\beta_{si}\lambda_{si}}r_{si}^Tf_{si}\nonumber\\
	&+\frac{\epsilon_m^2e^{-2\nu_m t}}{4\lambda_m\black{\bar{\gamma}}_m}+\frac{\beta_m\epsilon_{si}^2 e^{-2\nu_{si} t}}{4\beta_{si}\lambda_{si}\black{\bar{\gamma}}_{si}}\bigg)
	\end{align}

	Applying the Young's inequality to (\ref{eq:V_e_final-f}), we have
	\begin{align} \label{eq:fve2-f_c}
	\dot{V} \le& \sum_{i=1}^{N}\bigg( \!-\!\frac{\phi_{m}}{2}|\dot{q}_m|^2\!-\!\frac{\phi_{si}}{2}|\dot{q}_{si}|^2\!-\!\xi_i|\chi_{i}|^2\!-\!\frac{\psi_{m}}{2}|e_m|^2\nonumber\\
	&+\frac{1}{4\lambda_m^2\phi_m}|f_m|^2\nonumber+\frac{\beta_m^2}{4\lambda_{si}^2\phi_{si}\beta_{si}^2}|f_{si}|^2-\frac{\psi_{si}}{2}|e_{si}|^2\nonumber\\
	&+\frac{\epsilon_m^2e^{-2\nu_m t}}{4\lambda_m\black{\bar{\gamma}}_m}+\frac{\beta_m\epsilon_{si}^2 e^{-2\nu_{si} t}}{4\beta_{si}\lambda_{si}\black{\bar{\gamma}}_{si}}\bigg)
	\end{align} 


	Similar to (\ref{vet4_v}), we have  

\begin{align} \label{vet3_v}
	V \leq & V(0)-\sum_{i=1}^{N}  \int_{0}^{t}\left(\frac{\phi_{m}}{2}|\dot{q}_m(\sigma)|^2+\frac{\phi_{si}}{2}|\dot{q}_{si}(\sigma)|^2\right)\, d\sigma\nonumber\\
	&-\sum_{i=1}^{N}  \int_{0}^{t}\left(\frac{\psi_{m}}{2}|e_m(\sigma)|^2+\frac{\psi_{si}}{2}|e_{si}(\sigma)|^2\right)\, d\sigma\nonumber\\
	&-\sum_{i=1}^{N}  \int_{0}^{t}\xi_i|\chi_{i}(\sigma)|^{2}\, d\sigma+ \Delta_1 + \Delta_2
	\end{align} 
	with 
$\Delta_2 = \sum_{i=1}^{N} \frac{1}{4\lambda_m^2\phi_m}\|f_m\|_2^2\nonumber+\sum_{i=1}^{N} \frac{\beta_m^2}{4\lambda_{si}^2\phi_{si}\beta_{si}^2} \|f_{si}\|_2^2 < \infty \nonumber
$.

The remainder of the proof is obvious by following the same line of reasoning for Claim (1). 

Finally, we show that Claim (3) is true. Let 
	\begin{align}
		\sigma_z \triangleq \rho_z+ \kappa_z\varepsilon_z, z\in\mathcal{M}
		\end{align}
	For $t \in [t_{k_z}, t_{{k_z}+1})$, calculating the upper right-hand Dini derivative of $|\sigma_z|$, we obtain that
\begin{align}\label{eq:Dini}
	D^+|\sigma_z| =\frac{\sigma_z^T\dot{\sigma}_z}{|\sigma_z|}\le |\dot{\sigma}_z|=|\kappa_z\dot{r}_z+Y_z\dot{\hat{\theta}}_z+\dot{Y}_z\hat{\theta}_z|
\end{align}
Using  Assumption~\ref{amp:Y} and the boundedness of $\ddot{q}_z, \dot{x}_z, \hat{\theta}_z, \dot{\hat{\theta}}_z$, it is concluded that there exists a constant $Q_z \in \mathbb{R}^+$ such that $|\kappa_z\dot{r}_z+Y_z\dot{\hat{\theta}}_z+\dot{Y}_z\hat{\theta}_z|\leq Q_z $. 
Thus, from (\ref{eq:event_iota1}), (\ref{eq:event1}) and (\ref{eq:Dini}),
the inter-event intervals satisfy: 
\begin{align}\label{eq:inter-event0}
	t_{{k_z}+1}-t_{k_z} \ge \dfrac{|\sigma_z(t_{{k_z}+1}^z)|-|\sigma_z(t_{k_z}^z)|}{Q_z}
\end{align}
From the triggering conditions (\ref{eq:event_iota1}) and (\ref{eq:event1}), one has $\lim\limits_{t\to t_{{k_z}+1}}|\sigma_z| \ge \epsilon_ze^{-\nu_z t}$. Combining with $\sigma_z(t_{k_z})=0$, (\ref{eq:inter-event0}) can be written as
\begin{align}\label{inter-event1}
	t_{{k_z}+1}-t_{k_z}&\ge \dfrac{\epsilon_ze^{-\nu_z t}}{Q_z}\ge \dfrac{\epsilon_ze^{-\nu_z t_{k_z+1}}}{Q_z} >0
\end{align}

Assume, for contradiction, that Zeno behavior occurs and $\lim_{k \to \infty} t_{k_z} = \mathcal{T}_z$ for some finite $\mathcal{T}_z > 0$, $z\in\mathcal{M}$. Then, for any $t \in [0, \mathcal{T}_z)$, there exists a constant $g_z > 0$ such that
\[
g_z \leq \frac{\epsilon_z e^{-\nu_z t}}{Q_z}.
\]
Hence, the cumulative time up to the $k_z$th event satisfies:
\begin{align}
	t_{k_z} &= (t_{k_z} - t_{k_z-1}) + (t_{k_z-1} - t_{k_z-2}) + \cdots + (t_{1_z} -0) \notag \\
	&\geq k_z \cdot \frac{\epsilon_z e^{-\nu_z t}}{Q_z} \geq k_z g_z.
\end{align}
Taking the limit as $k_z \to \infty$, we obtain:
\[
\lim_{k_z \to \infty} t_{k_z} \geq \lim_{k_z \to \infty} k_z g_z = \infty,
\]
which contradicts the assumption that $\lim_{k_z \to \infty} t_{k_z} = \mathcal{T}_z$. Thus, Zeno behavior is excluded by contradiction.
This completes the proof.

\end{proof}

\begin{remark}
	In this section, a novel hybrid time/event-triggered interaction framework for teleoperation is proposed. The time-triggered sampling mechanisms are designed for the inter-robot information interaction, while the control updating is driven by an event-triggered mechanism.
\end{remark}

\begin{remark}\label{rek:mininum_IET}
		For the event-triggered control, even though it is proven that Zeno behavior does not exist due to the triggering functions in  (\ref{eq:event1}), it can still be observed that the minimum inter-event interval of each manipulator tends to \black{be quite small} as  $t\to\infty$ from (\ref{inter-event1}). This drawback should be removed in the future work.
\end{remark}


We summarize the event-triggered control in Algorithm~\ref{alg:control}.
\begin{algorithm}[H]
\caption{Adaptive Event-triggered control with the TOD scheduling protocol}\label{alg:control}
	\hspace*{0.00in}{\bf Input:} Measurements $q_z, \dot{q}_z, x_z, \dot{x}_z$ from the controlled manipulator $z$, $z\in\mathcal{M}$; 
 virtual systems (\ref{virtual_m_c})  and (\ref{virtual_si_c});\\
	\hspace*{0.00in}{\bf Output: }Event time $t_{k_z}$, $k_z= 1, 2, \cdots$, and $\tau_z(t_k)$ for the manipulator $z$, $z\in\mathcal{M}$;
	\begin{algorithmic}
		\FORALL{$z\in\mathcal{M}$}
		\WHILE{$t< t_{k_z}$}
		\STATE measure $q_z, \dot{q}_z$;
		\STATE receive $x_j$ from the remote side ($j=m$ if $z = {si}$ and $j={si}$ if $z=m$, $i= 1, \cdots, N$);
		\STATE compute  $ x_z, \dot{x}_z$ by evolving the virtual systems (\ref{virtual_m_c})  if $z = m$ and (\ref{virtual_si_c})  if $z=si$;
		\IF {$t=s_{k_z}^z$}
		\IF{$z=m$}
		\STATE transmit $x_m(s_{k_m}^m)$ to the slave side;
		\ELSE 
		\STATE pick $i_k^*$ by (\ref{eq:weighting_matrices}) and transmit $x_{si}(s_{k_{si}}^{si})$ with $i = i_k^*$ to the master side;
		\ENDIF
		\ENDIF
		\STATE compute $e_z, \dot{e}_z$ from (\ref{errors_v}); compute $r_z$ from (\ref{s_variable}); compute $Y_z$ from (\ref{eq:Y_z1});
		\STATE compute $\hat{\theta}_z$ by integrating $\dot{\hat{\theta}}_z$ from $\hat{\theta}_z(0)$ using (\ref{tol_c});
		\STATE compute $\iota_z$ from (\ref{eq:event_iota1})- (\ref{eq:s_error});
		\IF{$t_{k_z+1}$ is found with (\ref{eq:event1})}
		 \STATE  $k_z \leftarrow k_z+1 $; compute $\tau_{z}(t_{k_z})$ from (\ref{tol_c}); 
		   \ENDIF
		\ENDWHILE		
		\ENDFOR
	\end{algorithmic}
\end{algorithm}


\subsection{Event-triggered communication}
In this section, we address the control design of the SMMS teleoperation system (\ref{eq:master})-(\ref{eq:slave}) under an event-triggered communication. The control architecture is described in Fig.~\ref{fig:control setup_v}.

\begin{figure}[!htb]
	\centering
	\includegraphics[width=0.95\linewidth]{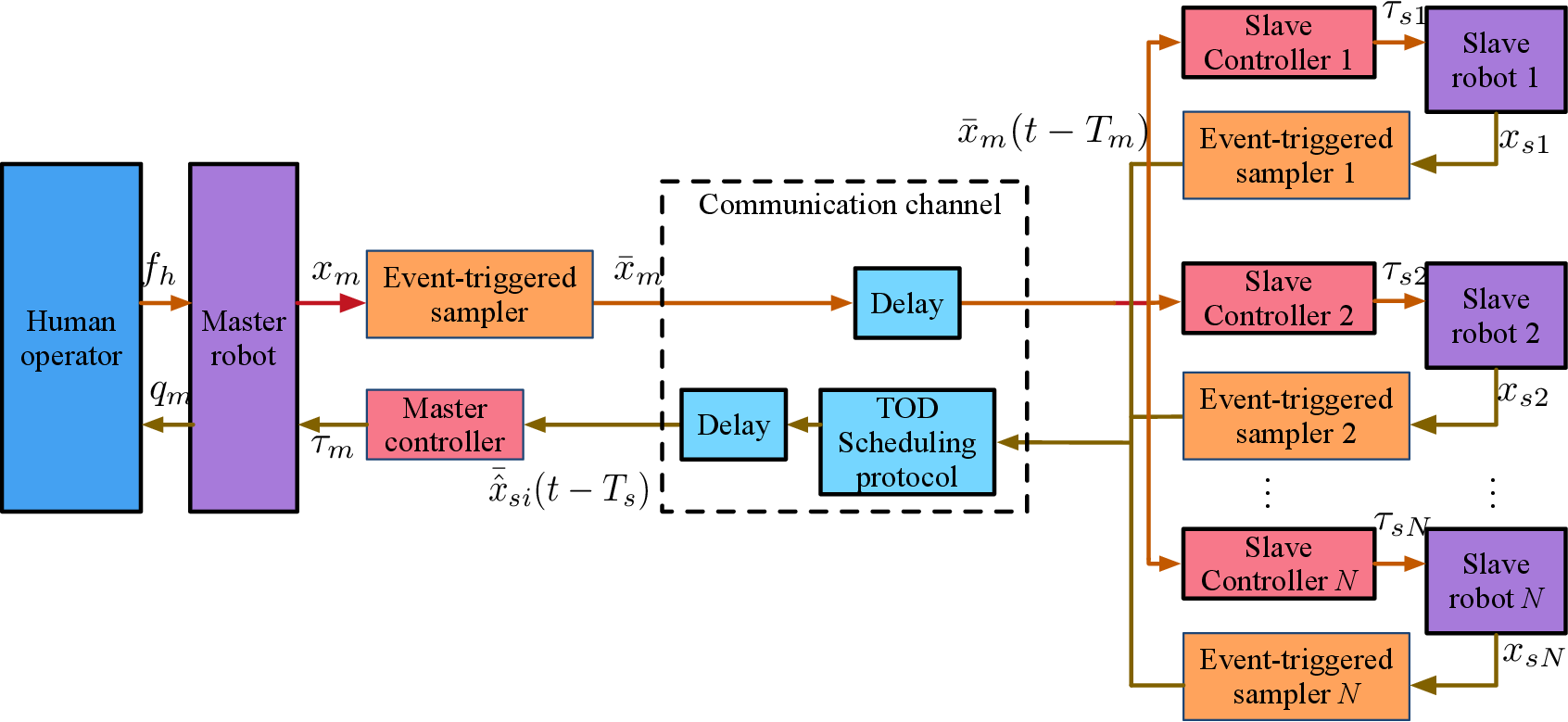}
	\caption{The  control framework under the event-triggered communication for the SMMS teleoperation system with the TOD scheduling protocol. }
	\label{fig:control setup_v}
\end{figure}
The triggered sampling instants are represented by the time sequences $s_{k_m}^m, s_{k_{si}}^{si} (i=1, 2, ... N)$ for the master and the $i$th slave robots, respectively, where $k_z\in\mathbb{N}$ satisfy $s_{0}^z=0$, $s_{k_z}^z<s_{k_z+1}^z$,  $z\in\mathcal{M}$.  On the slave side, the multiple slaves can not transmit their data simultaneously at a time due to the communication bandwidth limitation. Hence, the slave robots collectively generate the sequence of time instants $s_0^s, s_1^s,..., s_{k}^s, ..., k\in\mathbb{N}$ at which the slaves' data are transmitted to the master according to the TOD scheduling protocol.

In the event-based communication, the sampling instants in the master and the slave sides are generated by an event-triggered mechanism. Hence, in this case, $\bar{x}_m$ is defined as follows:
\begin{align}
	\bar{x}_m = x_m(s_{k_m}^m), t\in[s_{k_m}^m, s_{{k_m}+1}^m)\label{eq:bar_x_m_event}
\end{align}
Thus, the virtual observers (\ref{virtual_m_c})-(\ref{virtual_si_c}) with $\bar{x}_m$ given in (\ref{eq:bar_x_m_event}) are proposed in this event-triggered communication case. 

The proposed adaptive controllers under the event-triggered communication are presented by
\begin{align}\label{tol_event}
	\left\{
\begin{aligned}
&\tau_z\!=\!-\kappa_zr_z\!-\!Y_z\hat{\theta}_z,
\\
&\dot{\hat{\theta}}_z=\Gamma_z Y^{T}_z r_z, t\in[s_{k_z}^z, s_{k_z+1}^z), z\in\mathcal{M}
	\end{aligned}
	\right.
\end{align}
where $r_z$ and $Y_z$ are given in (\ref{s_variable}) and  (\ref{eq:Y_z1}), respectively, $\kappa_z>0$, $0< \Gamma_z \in \mathbb{R}^{n\times n}$. The event-triggered conditions are designed by 
\begin{align} \label{eq:event_iota2}
	s_{{k_z}+1}^z=\inf\left\{t|t>s_{k_z}^z,\iota_z^{'} \ge 0\right\}
\end{align}
where 
\begin{equation}\label{eq:triggering condition}
	\begin{aligned}
		\iota_z^{'}\!=\!|\delta_z|^2\! -\! \frac{2c_z(1\!-\!p_z)}{\beta_z}|\dot{x}_z|^2\!-\!\epsilon_z^{'} e^{\!-\!\nu_z^{'} t}, z\!\in\!\mathcal{M}\\
	\end{aligned}
\end{equation}
are the event-triggered functions, $c_z, \epsilon_z^{'}, \nu_z^{'} > 0, z \in \mathcal{M}$, and 
\begin{align}\delta_m\triangleq x_m-\bar{x}_m, 
	 \delta_{si}\triangleq x_{si}-\bar{x}_{si} \label{eq:delta}
\end{align}
with $\bar{x}_m, \bar{x}_{si}$ given in (\ref{eq:bar_x_m_event}) and (\ref{eq:bar_x_si}), respectively. 
Then, by substituting (\ref{virtual_m_c}), (\ref{eq:reconstructed_hat_xsi}), (\ref{virtual_si_c}), and (\ref{eq:bar_x_m_event}) into (\ref{eq:master})-(\ref{eq:slave}), we obtain the closed-loop teleoperation system model
\begin{equation}\label{closed-loop_v}
\left\{
\begin{aligned} 
f_m=&M_m(q_m) \dot{r}_m+C_m(q_m, \dot{q}_m) r_m+\kappa_m r_m
-Y_m \tilde{\theta}_m\\ 
f_{si}=&M_{si}(q_{si}) \dot{r}_{si}+C_{si}(q_{si}, \dot{q}_{si}) r_{si}+\kappa_{si} r_{si}-Y_{si} \tilde{\theta}_{si}\\
\ddot{x}_m=&-\alpha_m\dot{x}_m+k_me_m-\beta_m\left(x_m-\frac{1}{N}\sum_{i=1}^{N}x_{si}\right)\\
&-\frac{\beta_m}{N}\sum_{i=1}^{N}\!\left(L_{si}^{'}\!+\!\delta_{si}(t\!-\!T_s)\right)\!
+\frac{\beta_m}{N}\sum_{i\!=\!1,i\!\neq\! i_k^*}^N\eta_{i}(t\!-\!T_s)\\
\ddot{x}_{si}=&-\alpha_{si}\dot{x}_{si}+k_{si}e_{si}-\beta_{si}\left(\check{x}_{si}-x_m\right)\\
&-\beta_{si}\left(L_m^{'}+\delta_m(t-T_m)\right),\\
\dot{\eta}_{i}=&0, t \in [s_{k}^s, s_{{k}+1}^s), i= 1, 2,..., N, k \in \mathbb{N}
\end{aligned}\right.
\end{equation}
with the delayed reset system (\ref{eq:event_c_reset}), 
and  $
L_m^{'}\!=\!\int_{t\!-\!T_m}^t\dot{x}_m(v)dv$, 
$L_{si}^{'}\!=\!\int_{t\!-\!T_s}^t\dot{x}_{si}(v)dv
$.



The following candidate Lyapunov-Krasovskii functional for the closed-loop system (\ref{closed-loop_v}) is proposed:
\begin{equation}\label{V_v}
V^{'}=V_1+V_2^{'}+V_T, t \in [s_{k}^s,s_{{k}+1}^s)
\end{equation} 
with $V_1, V_T$ given in (\ref{V_v_c}),

\begin{align}
V_2^{'}=&N\int_{-d_m}^{0}\int_{t+v}^{t}\dot{x}_m^T(\sigma)R_m\dot{x}_m(\sigma)\,d\sigma\,dv\nonumber\\
&+\sum_{i=1}^{N} \int_{-d_s}^{0}\int_{t+v}^{t}\dot{x}_{si}^T(\sigma)R_{si}\dot{x}_{si}(\sigma)\,d\sigma\,dv\nonumber\\ &+\frac{N\beta_m}{2(1-p_m)}\int_{t-T_m}^{t}|\delta_m(\sigma)|^2\,d\sigma\nonumber\\
&+\sum_{i=1}^{N}\frac{\beta_m}{2(1-p_s)}\int_{t-T_s}^{t}|\delta_{si}(\sigma)|^2\,d\sigma 
\end{align}
where $R_m, R_{si}\in\mathbb{R}^{n\times n}$ are positive definite matrices. 

Similar to  Lemma \ref{LMIs_v}, it is easy to show that the Lyapunov functional $V^{'}$ is positive and does not grow at  $\black{s_{k+1}^s, k\in\mathbb{N}}$ if the LMIs (\ref{LMI_v}) hold with some $0<Q_i, U_i, P_i\in\mathbb{R}^{n\times n}$.


By now, we derive the following stability result.
\begin{Theorem}\label{thm:Theorem_comm}
	Consider the closed-loop teleoperation system (\ref{closed-loop_v}) with the event-triggered  mechanism (\ref{eq:event_iota2}) under the TOD scheduling protocol (\ref{eq:weighting_matrices}). If there exist positive-definite matrices $R_{m}, R_{si}, P_{i}, U_{i}, Z_{i}$ such that  the LMIs (\ref{LMI_v}) and the following LMIs are satisfied:
	
	\begin{align} \label{lmi_e}
			&\Pi_i
			\!=\!\begin{bmatrix}
				{\Pi_{i}^{11}} & {0}  & {0} &\Pi_{i}^{14}& {0} & {\Pi_i^{16}} \\
				{*} & {\Pi_{i}^{22}}  & \Pi_i^{23} & {0} & {0} & {0} \\ 
				{*} & {*} &{\Pi_{i}^{33}} & {0} & {0} & {0} \\ 
				{*} & {*} & {*} & {\Pi_{i}^{44}} & {0} & {0} \\ 
				{*} & {*} & {*} & {*} & {\Pi_i^{55}} & {0} \\  
				{*} & {*}  & {*} & {*} & {*} & {\Pi_i^{66}}
			\end{bmatrix}
			<0\\&	i  = 1, 2, \dots, N\nonumber
		\end{align}
		where
		\begin{align*}
					\Pi_{i}^{11}&=-\alpha_mI+\frac{\beta_m}{2}I+d_mR_m+c_mI,  \Pi_i^{14}\!=\!-\!\frac{\beta_m\!d_s}{2}I\\
				\Pi_{i}^{22}&=-\frac{\beta_m\alpha_{si}}{\beta_{si}}I+\frac{\beta_m}{2}I+d_sR_{si}+c_{si}I+hP_i\\
				\Pi_i^{23}&\!=\!-\!\frac{\beta_m\!d_m}{2}I, \Pi_{i}^{33}\!=\!-\!d_mR_m, \Pi_{i}^{44}\!=\!-\!d_sR_{si} \\
				\Pi_i^{16}&=\operatorname{row}_{j=1, \ldots, N}\left\{\frac{\beta_mh}{2}I, j \neq i\right\}\\
				\Pi_i^{55}&=-\operatorname{diag}_{j=1, \ldots, N}\left\{hU_{j}-Z_j, j \neq i\right\}\\
				\Pi_i^{66}&=- \operatorname{diag}_{j=1, \ldots, N}\left\{(1-p_s)Z_j, j \neq i\right\} 
	\end{align*}
   then the following claims hold:
	\begin{enumerate}
		\item[(1)] if the considered teleoperation system is in free motion, then all the signals are bounded, the system (\ref{closed-loop_v}) is asymptotically stable and $\lim_{t \rightarrow \infty}\left(x_m-\check{x}_{si}\right)=\lim_{t \rightarrow \infty}\dot{x}_m=\lim_{t \rightarrow \infty}\dot{x}_{si}=\lim_{t \rightarrow \infty}\left(q_m-\check{q}_{si}\right)=\lim_{t \rightarrow \infty}\dot{q}_m=\lim_{t \rightarrow \infty}\dot{q}_{si}=0$, $i=1,2,...,N$, $\lim_{t \rightarrow \infty}e_m =\lim_{t \rightarrow \infty}e_{si}=0 $;
		\item[(2)] if $f_m, f_{si} \in \mathcal{L}_{2} \cap \mathcal{L}_{\infty} $, then all the signals are bounded, and $\lim_{t \rightarrow \infty}\left(x_m-\check{x}_{si}\right)=\lim_{t \rightarrow \infty}\dot{x}_m=\lim_{t \rightarrow \infty}\dot{x}_{si}=\lim_{t \rightarrow \infty}\left(q_m-\check{q}_{si}\right)=\lim_{t \rightarrow \infty}\dot{q}_m=\lim_{t \rightarrow \infty}\dot{q}_{si}=\lim_{t \rightarrow \infty}e_m =\lim_{t \rightarrow \infty}e_{si}=0$;
		\item[(3)] the proposed event-triggered mechanism  excludes Zeno behavior.
\end{enumerate}
\end{Theorem}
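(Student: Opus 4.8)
The plan is to mirror the proof of Theorem~\ref{thm:Theorem_c}, using the functional $V'$ in (\ref{V_v}) as a discontinuous Lyapunov--Krasovskii functional for the reset system (\ref{closed-loop_v})--(\ref{eq:event_c_reset}). I would first treat the free-motion case and then the forced case, and close with the Zeno argument. Differentiating $V_1$ along (\ref{closed-loop_v}) and invoking Property~P2 gives $\dot{\mathcal V}_z=-\kappa_z r_z^T r_z+r_z^T f_z$ for every $z\in\mathcal M$; the crucial structural simplification relative to Theorem~\ref{thm:Theorem_c} is that, because the controller (\ref{tol_event}) uses the continuously evaluated $r_z$ and $Y_z$, no held-signal errors $\rho_z,\varepsilon_z$ appear, so the coefficients $\phi_m=\tfrac{\kappa_m}{2\lambda_m}$, $\phi_{si}=\tfrac{\beta_m\kappa_{si}}{2\beta_{si}\lambda_{si}}$, $\psi_m=\lambda_m^2\phi_m$, $\psi_{si}=\lambda_{si}^2\phi_{si}$ are positive directly from $\kappa_z>0$ (no $\gamma_z$-correction and hence no counterpart of (\ref{eq:lmi_control_2}) is needed). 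The entire burden shifts to the communication errors $\delta_z$ of (\ref{eq:delta}): the cross terms generated by the $\delta_{si}(t-T_s)$ and $\delta_m(t-T_m)$ entries of the virtual-observer dynamics in (\ref{closed-loop_v}), split by Young's inequality, are exactly what produce the $\tfrac{\beta_m}{2}I$ contributions on the diagonal blocks $\Pi_i^{11},\Pi_i^{22}$, together with residual delayed quadratics $\tfrac{N\beta_m}{2}|\delta_m(t-T_m)|^2$ and $\tfrac{N\beta_m}{2}|\delta_{si}(t-T_s)|^2$.

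The heart of the argument is the differentiation of $V_2'$. The two double integrals yield the standard Jensen-type bounds $d_mR_m$, $d_sR_{si}$ together with the reciprocal terms $-\tfrac{1}{d_m}L_m'^TR_mL_m'$, $-\tfrac{1}{d_s}L_{si}'^TR_{si}L_{si}'$, while the two single integrals of $|\delta_m|^2,|\delta_{si}|^2$ produce, using $\dot T_j\le p_j<1$ from Assumption~\ref{amp:delay}, the pair $+\tfrac{N\beta_m}{2(1-p_m)}|\delta_m|^2-\tfrac{N\beta_m}{2}|\delta_m(t-T_m)|^2$ and its slave analogue. The negative delayed terms absorb precisely the residual delayed quadratics identified above, while the positive current terms are dominated through the triggering rule: for $t<s_{k_z+1}^z$ the condition (\ref{eq:triggering condition})--(\ref{eq:event_iota2}) enforces $|\delta_z|^2\le \tfrac{2c_z(1-p_z)}{\beta_z}|\dot{x}_z|^2+\epsilon_z'e^{-\nu_z' t}$, so that $\tfrac{N\beta_m}{2(1-p_m)}|\delta_m|^2$ collapses into $Nc_m|\dot{x}_m|^2$ plus an exponentially decaying remainder (the source of the $c_mI$ and $c_{si}I$ entries in $\Pi_i^{11},\Pi_i^{22}$). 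The contribution of $\dot V_T$ is identical to that in Theorem~\ref{thm:Theorem_c} and is handled through Lemma~\ref{LMIs_v}.

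Collecting these estimates, I would assemble the indefinite part into a quadratic form $\chi_i^T\underline{\Pi}\,\Pi_i\,\underline{\Pi}\,\chi_i$ with $\chi_i=\operatorname{col}\{\dot{x}_m,\dot{x}_{si},L_m',L_{si}',\varrho_{i_{k}^{*}},\varsigma_{i_{k}^{*}}\}$ and $\underline{\Pi}=\operatorname{diag}\{1,1,1/d_m,1/d_s,1/h,1/h\}\otimes I$, so that $\Pi_i<0$ in (\ref{lmi_e}) gives, with $\xi_i\triangleq-\lambda_{\max}(\underline{\Pi}\,\Pi_i\,\underline{\Pi})>0$, the bound $\dot V'\le\sum_i(-\phi_m|\dot{q}_m|^2-\phi_{si}|\dot{q}_{si}|^2-\psi_m|e_m|^2-\psi_{si}|e_{si}|^2-\xi_i|\chi_i|^2)+\Delta_1$ with $\Delta_1$ integrable over $[0,\infty)$. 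Integrating and using the non-growth of $V'$ across the resets $s_{k+1}^s$ together with its positivity (the analogues of (\ref{eq:Theta_v}) and (\ref{positivity of Ve_v}) established just before the theorem under (\ref{LMI_v})), I obtain $V'\in\mathcal L_\infty$ and $\dot{q}_m,\dot{q}_{si},e_m,e_{si},\dot{x}_m,\dot{x}_{si},\eta_i\in\mathcal L_2\cap\mathcal L_\infty$; these memberships feed Barbalat's lemma and the generalized Barbalat argument exactly as in Theorem~\ref{thm:Theorem_c} to yield $e_m,e_{si}\to0$, $\dot{x}_m,\dot{x}_{si}\to0$, $\ddot x_m,\ddot x_{si}\to0$, hence $x_m-\check x_{si}\to0$ and $q_m-\check q_{si}\to0$. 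Claim~(2) follows by retaining $\tfrac{1}{2\lambda_m}r_m^Tf_m+\tfrac{\beta_m}{2\beta_{si}\lambda_{si}}r_{si}^Tf_{si}$ and splitting them with Young's inequality as in (\ref{eq:fve2-f_c}), adding a finite $\Delta_2$. For Claim~(3), $\delta_z$ resets to zero at each instant $s_{k_z}^z$, and between instants $\dot\delta_z=\dot{x}_z$ is bounded by the $\mathcal L_\infty$ conclusions above, so a new event requires $|\delta_z|^2\ge\epsilon_z'e^{-\nu_z' t}>0$; this lower-bounds the inter-event interval and excludes accumulation by the same contradiction argument as in (\ref{inter-event1}).

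I expect the main obstacle to be the bookkeeping that shows the collected estimate reproduces exactly the block structure of $\Pi_i$. Specifically, one must verify that the triggering coefficient $\tfrac{2c_z(1-p_z)}{\beta_z}$ is tuned so that the positive $|\delta_z|^2$ terms convert cleanly into the $c_zI$ diagonal entries, and that the delayed errors $\delta_z(t-T_z)$ injected through the cross terms of $\dot V_1$ are cancelled, after Young's inequality, by the $-\tfrac{N\beta_m}{2}|\delta_z(t-T_z)|^2$ residues of the Krasovskii $\delta$-integrals, leaving no uncontrolled delayed quadratic in $\chi_i$ and allowing the $L_m',L_{si}'$ memory terms to be absorbed exactly as in the time-triggered case.
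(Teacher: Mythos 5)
Your proposal is correct and follows essentially the same route as the paper's own proof: the same functional $V'$ from (\ref{V_v}), the same absorption of the triggered errors $\delta_z$ into the $c_zI$ diagonal entries via (\ref{eq:triggering condition}), the same scaled quadratic form $\chi_i^T\underline{\Xi}'\Pi_i\underline{\Xi}'\chi_i$ with $\underline{\Xi}'=\operatorname{diag}\{1,1,1/d_m,1/d_s,1/h,1/h\}\otimes I$, and the same Barbalat/contradiction arguments for Claims (1)--(3) that the paper invokes by reference to Theorem \ref{thm:Theorem_c}. In fact your write-up supplies details the paper leaves implicit (e.g., that the delayed quadratics $\tfrac{N\beta_m}{2}|\delta_z(t-T_z)|^2$ from $\dot V_2'$ cancel the Young-split cross terms of $\dot V_1$, and that the Zeno bound rests on $\dot\delta_z=\dot x_z$ being bounded with $\delta_z$ resetting to zero), including the bookkeeping caveat you flag about the $\beta_z$ normalization in the triggering coefficient, which is indeed the one point where the paper's proof text ($\beta_m$ in the slave bound) and its stated condition (\ref{eq:triggering condition}) ($\beta_{si}$) must be reconciled for the $c_{si}I$ entry to come out exactly.
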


\begin{proof}
	
	
	We first consider the case that the master and the slave manipulators are in free motion. The derivative of $V_1$ along with the trajectory of the system (\ref{closed-loop_v}) is given by

	\begin{align}\label{eq:dot_V_1}
	\dot{V}_1
	=&-\frac{N\kappa_m}{2\lambda_m}\dot{q}_m^T\dot{q}_m-\frac{N\kappa_m\lambda_m}{2}e_m^Te_m-N\alpha_m\dot{x}_m^T\dot{x}_m \nonumber\\
		&-\beta_m\!\sum_{i=1}^{N}\!\left(\!\dot{x}_{si}^T\!\left(L_m^{'}\!+\!\delta_m(t\!-\!T_m)\!\right)\!
	\!+\!\dot{x}_m^T\!\left(\!L_{si}^{'}\!+\!\delta_{si}(t\!-\!T_s)\!\right) \!\right)\nonumber\\
	&+\sum_{i=1}^{N}\bigg(-\frac{\beta_m\kappa_{si}}{2\beta_{si}\lambda_{si}}\dot{q}_{si}^T\dot{q}_{si}-\frac{\beta_m\kappa_{si}\lambda_{si}}{2\beta_{si}}e_{si}^Te_{si} \nonumber\\
	&-\frac{\beta_m\alpha_{si}}{\beta_{si}}\dot{x}_{si}^T\dot{x}_{si}\bigg)+\beta_m\dot{x}_m^T\sum_{i=1,i\neq i_k^*}^N\eta_{i}(t-T_s) 
\end{align}
	
	Calculating the time derivative of $V_2^{'}$ yields

	\begin{align} 
	\dot{V}_2^{'} \!\le \! &Nd_m\dot{x}_m^T  \!R_m \! \dot{x}_m \!- \!\frac{N}{d_m}L_m^{'T}R_mL_m^{'}  \!- \!\frac{N\beta_m}{2}|\delta_m(t \!- \!T_m)|^2  \nonumber\\
	&+\sum_{i=1}^{N} \!\left( \!d_s\dot{x}_{si}^T \! R_{si} \!\dot{x}_{si} \!- \!\frac{1}{d_s}L_{si}^{'T}R_{si}L_{si}^{'} \!- \!\frac{\beta_m}{2}|\delta_{si}(t \!- \!T_s)|^2  \!\right)  \nonumber\\
	&+\frac{N\beta_m}{2(1-p_m)}|\delta_m|^2+\sum_{i=1}^N\frac{\beta_m}{2(1-p_s)}|\delta_{si}|^2 
\end{align}

From the event-triggered conditions (\ref{eq:event_iota2}), one has $|\delta_m|^2< \frac{2c_m(1-p_m)}{\beta_m}|\dot{x}_m|^2+\epsilon_m^{'} e^{-\nu_m^{'} t}$ for $t \in (s_{k_m}^m, s_{{k_m}+1}^m)$ and $|\delta_{si}|^2< \frac{2c_{si}(1-p_s)}{\beta_m}|\dot{x}_{si}|^2+\epsilon_{si}^{'} e^{-\nu_{si}^{'} t}$ for $t \in (s_{k_{si}}^{si}, s_{{k_{si}}+1}^{si})$. \black{Denote 				$\phi_{m}^{'}\triangleq\frac{\kappa_m}{2\lambda_m}, \phi_{si}^{'}\triangleq\frac{\beta_m\kappa_{si}}{2\beta_{si}\lambda_{si}},							\psi_{m}^{'}\triangleq\frac{\lambda_m\kappa_m}{2}, \psi_{si}^{'}\triangleq\frac{\beta_m\lambda_{si}\kappa_{si}}{2\beta_{si}}$. } Therefore, one has  for $t \in [s_{k}^s, s_{{k}+1}^s)$,

\begin{align} 
	\dot{V}^{'} \!\le\!& \sum_{i=1}^{N}\!\Bigg( \!-\!\phi_{m}^{'}\dot{q}_m^T\dot{q}_m\!-\!\phi_{si}^{'}\dot{q}_{si}\dot{q}_{si}\!-\!\psi_{m}^{'}e_m^Te_m-\psi_{si}^{'}e_{si}^Te_{si}
	\nonumber\\
	&+\chi_{i}^T\underline{\Xi}^{'}\Pi_i\underline{\Xi}^{'}\chi_i
	+\frac{\beta_m}{2}\left(\frac{\epsilon_m^{'}e^{-\nu_m^{'} t}}{1-p_m}\!+\!\frac{\epsilon_{si}^{'}e^{-\nu_{si}^{'} t}}{1-p_s}\right)\Bigg)
\end{align}
where  $\underline{\Xi}^{'}=\text{diag}\left\{1,1,1/d_m,1/d_s,1/h,1/h \right\}\black{\otimes I}$.

%
	
	Through an argument similar to the proof of Theorem 1, one can easily reach that Claim (1) and Claim (2) hold by using  (\ref{lmi_e}).

	Finally, following the same line of reasoning as that of Claim 3 in Theorem \ref{thm:Theorem_c}, it can be shown that Claim 3 in Theorem \ref{thm:Theorem_comm} is true.  This completes the proof.
	
	
\end{proof}

The proposed control scheme for the SMMS teleoperation system (\ref{eq:master})-(\ref{eq:slave}) under an event-triggered communication and the TOD scheduling protocol is summarized in Algorithm~\ref{alg:comm}. 

\begin{algorithm}
\caption{Adaptive Control with event-triggered communication and the TOD scheduling protocol}\label{alg:comm}
	\hspace*{0.00in}{\bf Input:} Measurements $q_z, \dot{q}_z, x_z, \dot{x}_z$ from the controlled manipulator $z$, $z\in\mathcal{M}$; 
	virtual systems (\ref{virtual_m_c})  and (\ref{virtual_si_c}) with $\bar{x}_m$ given in (\ref{eq:bar_x_m_event}) ;\\
	\hspace*{0.00in}{\bf Output: }Event time $s_{k_z}^z$, $k_z= 1, 2, \cdots$, and control output $\tau_z(t)$, $z\in\mathcal{M}$;
	\begin{algorithmic}
		\FORALL{$z\in\mathcal{M}$}
		\WHILE{$t<s_{k_z}^z$}
		\STATE measure $q_z, \dot{q}_z$;
		\STATE receive $x_j$ from the remote side ($j=m$ if $z = {si}$ and $j={si}$ if $z=m$, $i= 1, \cdots, N$);
		\STATE compute  $ x_z, \dot{x}_z$ by evolving the virtual systems (\ref{virtual_m_c})  if $z = m$ and (\ref{virtual_si_c})  if $z=si$ with $\bar{x}_m$ given in (\ref{eq:bar_x_m_event}); 
		\STATE compute $e_z, \dot{e}_z$ from (\ref{errors_v}); compute $r_z$ from (\ref{s_variable}); compute $Y_z$ from (\ref{eq:Y_z1});
		\STATE compute $\hat{\theta}_z$ by integrating $\dot{\hat{\theta}}_z$ from $\hat{\theta}_z(0)$ using (\ref{tol_event});
		\STATE compute $\iota_z^{'}$ from (\ref{eq:triggering condition})-(\ref{eq:delta});
		\IF{$s_{k_z+1}^z$ is found with (\ref{eq:event_iota2})}
		\STATE  $k_z \leftarrow k_z+1 $; 
		\IF{$z=m$}
		\STATE transmit $x_m(s_{k_m}^m)$ to the slave side;
		\ELSE 
		\STATE pick $i_k^*$ by (\ref{eq:weighting_matrices}) and transmit $x_{si}(s_{k_{si}}^{si})$ with $i = i_k^*$ to the master side;  set $s_k^s = t$, $k\leftarrow k+1$; 
		\ENDIF
		\ENDIF
       \STATE compute $\tau_z(t)$ from (\ref{tol_event});
		\ENDWHILE		
		\ENDFOR
	\end{algorithmic}
\end{algorithm}

%

\begin{remark}
Compared with \cite{li2019bilateral} in which the P+d  control of SMMS teleoperation systems with the RR and the TOD protocols was addressed, in this section we propose the adaptive controllers for SMMS teleoperation systems by simultaneously considering the TOD protocol and the event-triggered mechanisms. Thus, the communication bandwidth usage can be further improved and data collisions can be avoided. In addition,  the proposed controllers do not require relative velocity measurements, which eliminates the need for real-time transmission of velocity information across the network. This makes their application more practical, as the transmission of velocities typically demands high communication bandwidth and may introduce noises.

\end{remark}

\begin{remark}\label{rek:parameter_choose}
For both control schemes (\ref{tol_c}) and (\ref{tol_event}), the selection of control parameters should consider the following guidelines based on the analysis of tracking performance, stability, and robustness against communication constraints:
(i) The feedback gain $\kappa_z$ for $z \in \mathcal{M}$ should be chosen in accordance with the saturation limits and bandwidth of the actuators, ensuring that the control inputs do not exceed actuator capacity;
(ii) Larger values of $\lambda_z$ improve tracking performance by increasing the influence of the tracking error, while larger values of $\beta_z/\alpha_z$ enhance convergence speed. However, excessively large values of $\beta_z/\alpha_z$ can negatively impact stability, especially under long sampling intervals or in the presence of variable network delays;
(iii) The proposed control schemes exhibit improved delay robustness with a higher $\alpha_z/\beta_z$ ratio (or equivalently a smaller $\beta_z/\alpha_z$), which allows for larger sampling intervals while maintaining system stability and reducing sensitivity to communication constraints.
\end{remark}

\begin{remark}
The main results are applicable to the case where the systems are subject to disturbances,   meaning that exists a disturbance $D_i\in\mathcal{L}_2\cap\mathcal{L}_\infty$ for each manipulator such that the dynamic model of the manipulators in the SMMS teleoperation system is described as:
		\begin{align}
			M_i(q_i)\ddot{q}_i+C_i(q_i, \dot{q}_i)\dot{q}_i +G_i(q_i) +D_i = \tau_i+f_i
			\end{align}
	It can be easily deduced that if  $D_i\in\mathcal{L}_2\cap\mathcal{L}_\infty$, then the considered teleoperation system with the external force $f_i\in\mathcal{L}_2\cap\mathcal{L}_\infty$ is stable with that all the signals are bounded, and $\lim_{t \rightarrow \infty}\left(x_m-\check{x}_{si}\right)=\lim_{t \rightarrow \infty}\dot{x}_m=\lim_{t \rightarrow \infty}\dot{x}_{si}=\lim_{t \rightarrow \infty}\left(q_m-\check{q}_{si}\right)=\lim_{t \rightarrow \infty}\dot{q}_m=\lim_{t \rightarrow \infty}\dot{q}_{si}=\lim_{t \rightarrow \infty}e_m =\lim_{t \rightarrow \infty}e_{si}=0$. This conclusion can be derived by applying Claims 2 of Theorems 1 and 2, since the system can be reformulated as:
			\begin{align}
		M_i(q_i)\ddot{q}_i+C_i(q_i, \dot{q}_i)\dot{q}_i +G_i(q_i)= \tau_i+\bar{f}_i
	\end{align}
with $\bar{f}_i = f_i-D_i\in\mathcal{L}_2\cap\mathcal{L}_\infty$
\end{remark}

\section{Discussion}\label{sec:discussion}

In this paper, we consider a class of unknown teleoperation systems involving a single master and multiple slaves, connected over a wide-area communication network. To mitigate these issues, the TOD scheduling protocol is used to determine the sequence of data transmission from the slave side. Additionally, to further reduce network load and improve computational efficiency, event-triggered communication and event-triggered control updates are implemented, respectively.

Establishing a model that simultaneously accounts for event-triggered mechanisms and the TOD scheduling protocol, while developing adaptive control strategies to ensure the desired performance of SMMS teleoperation systems, is a challenging task. The use of the TOD protocol complicates controller design due to the change in the order of data transmission, leading to discontinuities. Specifically, differentiation of the information received from the remote side is typically required in adaptive control design, but the TOD scheduling mechanism conflicts with this process. Under the TOD protocol, data transmission from the slaves to the master is discontinuous. In (\ref{eq:Y_z1}), the derivative of $e_z$ is needed to formulate $Y_z$ and $\dot{r}_z$, which is continuous in systems without the TOD scheduling protocol and the event-triggered mechanisms, but is discontinuous in our case. In our approach, $e_z$ is redefined as the position error between the manipulator $z$ and the virtual system $z$, making it continuous and ensuring that $\dot{e}_z$ always exists. This facilitates the stability analysis of the closed-loop system. The introduction of virtual observers also eliminates the need to transmit velocities over the network and simplifies the event-triggered conditions.
Additionally, the system models, such as (\ref{closed-loop_v_c}) and (\ref{closed-loop_v}), are hybrid time-delay systems with reset conditions (\ref{eq:event_c_reset}), making the stability analysis more complex. To address this, we propose non-continuous Lyapunov-Krasovskii functions (\ref{V_v_c}) and (\ref{V_v}) to handle these challenges. 

\black{It should also be noted that the event-triggering conditions in (\ref{eq:event1}) and (\ref{eq:event_iota2})  are static rather than dynamic \cite{Xu2025TAC}. While dynamic event-triggering conditions, which include a dynamic term related to the relative threshold, could improve transmission efficiency, they also introduce additional computational demands and increase the complexity of control updates, as the dynamic term requires tuning via an updating algorithm. Given that the proposed adaptive control algorithms already include virtual systems, adding to the computational load, we chose not to use dynamic triggering conditions to conserve computational resources. 
Additionally, as highlighted in Remark~\ref{rek:mininum_IET}, although the proposed event-triggered mechanisms avoid Zeno behavior, the uniform lower bound of the inter-event intervals cannot be guaranteed, which is an issue that should be addressed in future work.}

As for large-scale teleoperation scenarios, the proposed controllers (\ref{tol_c}) and (\ref{tol_event}) adopt a distributed architecture, which enhances their practicality. To cope with the increased computational complexity as the number of slave manipulators grows greatly, the controllers can be deployed in a fully distributed manner, with appropriately tuned parameters to reduce communication and computation loads. Additionally, hardware acceleration techniques can be utilized to further improve efficiency, thereby supporting scalability and real-time performance.


\section{Conclusion}\label{sec:Conclusion}

In this paper, the event-triggered control and communication mechanisms  for a class of SMMS teleoperation systems with TOD scheduling protocol have been investigated, respectively.  To avoid using relative velocities, the virtual system for each manipulator has been constructed.  Two adaptive control algorithms have been proposed with the presented event-triggered mechanisms. By the Lyapunov–Krasovskii functionals, the stability criteria  have been obtained for the closed-loop system under the two proposed control schemes, respectively. The sufficient conditions related to the adaptive controller gains, the virtual systems' parameters, the upper bound of communication delays, and the maximum allowable sampling interval have been provided. Finally, the effectiveness of the proposed algorithms has been verified by the experimental results. \black{Future works will include the safety-guaranteed control design of teleoperation systems interacting with complex environments under communication constraints.}


\bibliographystyle{IEEEtran_doi}
\bibliography{document_R1}

\def\@IEEEBIOskipN{1\baselineskip}

\begin{IEEEbiography}[{\includegraphics[width=1in,height=1.25in,clip,keepaspectratio]{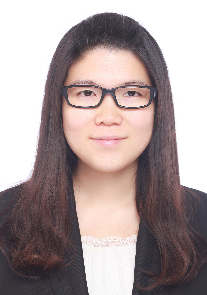}}]{Yuling Li}(M'18)
	received the B.Eng. and Ph.D. degrees  from University
	of Science and Technology Beijing (USTB),
	Beijing, China, in 2009 and 2016, respectively.
	She was a visiting scholar at the Department of Automatic Control, Lund University,
	Sweden, from Sep. 2012-Sep.
	2014, and at Nanyang Technological University, Singapore, from Jan. 2024 to Jul. 2024.  In January 2016 she joined the School of Automation and Electrical Engineering, USTB, Beijing, China, and now is an Associate Professor. Her research interests include robotics, networked systems and nonlinear system control. 
\end{IEEEbiography}

\begin{IEEEbiography}[{\includegraphics[width=1in,height=1.25in,clip,keepaspectratio]{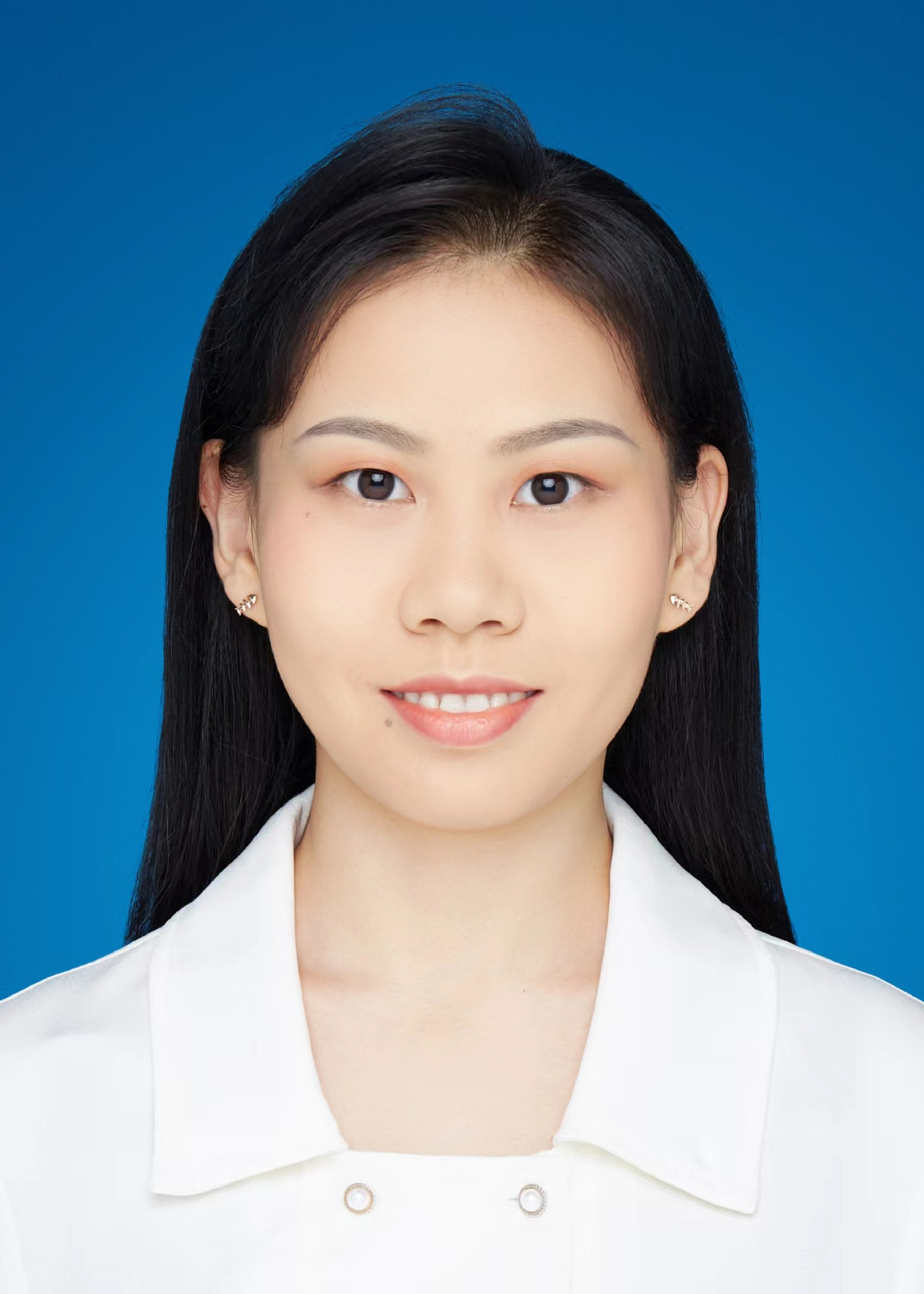}}]{Chenxi Li}
	received the Bachelor's degree from Huaqiao University, Xiamen, China, in 2020, and the Master's Degree in Control Science and Engineering from University of Science and Technology Beijing, Beijing, China, in 2023. Her current research interests include robotics and teleoperation systems.
\end{IEEEbiography}

\begin{IEEEbiography}[{\includegraphics[width=1in,height=1.25in,clip,keepaspectratio]{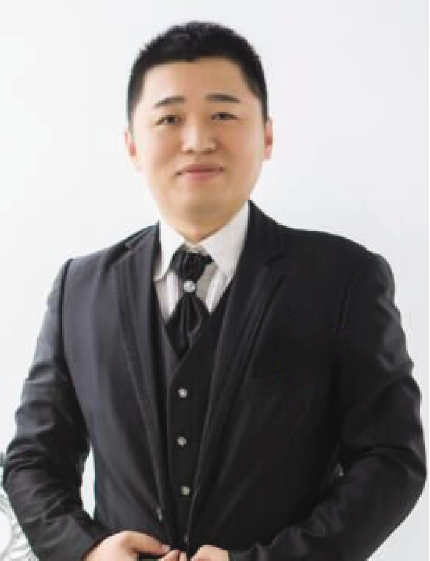}}]{Kun Liu} (M'16-SM'21) received the Ph.D. degree in electrical engineering and systems from Tel Aviv University, Tel Aviv-Yafo, Israel, in 2012. From 2013 to 2015, he was a Postdoctoral Researcher with the ACCESS Linnaeus Centre, KTH Royal Institute of Technology, Stockholm, Sweden. In 2015, he joined the School of Automation, Beijing Institute of Technology, Beijing, China. His current research interests include networked control, game-theoretic control, and security and privacy of cyber-physical systems, with applications in autonomous systems. 
\end{IEEEbiography}
\begin{IEEEbiography}[{\includegraphics[width=1in,height=1.25in,clip,keepaspectratio]{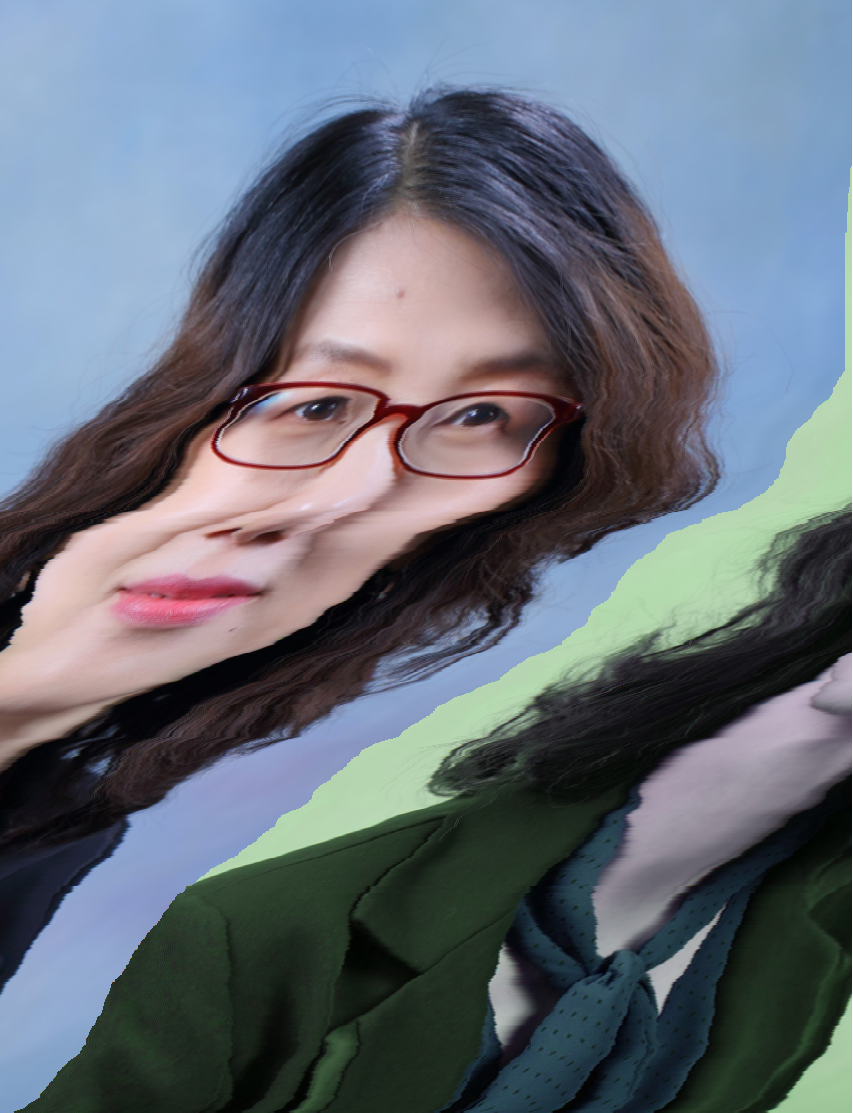}}]{Jie Dong}
received the B.E., M.E., and Ph.D. degrees from  University of Science and Technology Beijing, Beijing, China, in 1995, 1997, and 2007, respectively. From July 2004 to December 2004, she visited University of Manchester, Manchester, U.K., as a Visiting Scholar. She is currently a Professor with the School of Automation and Electrical Engineering, University of Science and Technology Beijing. Her research interests cover intelligent control theory and applications, process monitoring and fault diagnosis, and complex system modeling and control.
\end{IEEEbiography}
\begin{IEEEbiography}[{\includegraphics[width=1in,height=1.25in,clip,keepaspectratio]{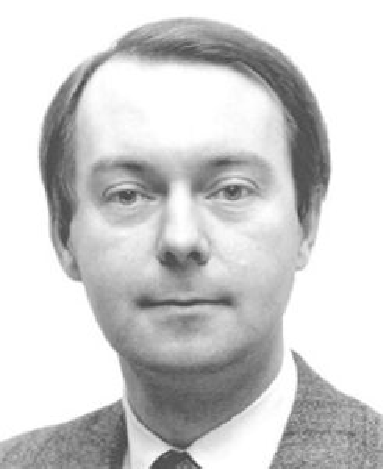}}]{Rolf Johansson}(F'12)
	received the Master-of-Science degree in Technical Physics in 1977, the Bachelor-of-Medicine degree in 1980, the doctorate in control theory in 1983, was appointed Docent in 1985, and received the Doctor-of-Medicine degree (M.D.) in 1986, all from Lund University, Lund, Sweden.
	He is IEEE Fellow; Fellow of the Swedish Society of Medicine; Member of B\'ar\'any  Society, and Fellow of the Royal Physiographic Society, Section of Medicine. He is Assoc. Editor of Int. J. Adaptive Control and Signal Processing and Editor of Mathematical Biosciences.\end{IEEEbiography}
\vfill

\end{document}